\documentclass[12pt]{article}
\usepackage{amsfonts}
\usepackage{amsmath}
\usepackage{amssymb}
\usepackage{graphicx}
\usepackage{color}
\usepackage[all, knot]{xy}
\usepackage{tikz}
\usepackage{array}
\usepackage{hyperref}
\usepackage{ulem}

\usepackage{comment}

\usepackage[utf8]{inputenc}
\usepackage{epstopdf}
\usepackage{subcaption}
\usepackage{amsthm}
\usepackage{enumitem}
\usepackage{mathrsfs}
\usepackage{mathtools}

\usepackage[margin=3cm]{geometry}

\def \be {\begin{equation}}
\def \ee {\end{equation}}
\def \bea {\begin{eqnarray}}
\def \eea {\end{eqnarray}}
\def \nn {\nonumber}

\def \rr {\raise.35ex\hbox{\small $\prime$}\kern-.17em{\mbox{\large $\imath$}}}

\def \dels {\partial\kern-.6em /\kern.1em}
\def \As {{A\kern-.5em / \kern.5em}}
\def \Ds {D\kern-.7em / \kern.5em}

\def \ks {k\kern-.5em /}
\def \ls {l\kern-.5em /}

\newcommand{\ci}[1]{}
\newcommand{\ba}{\begin{eqnarray}}
\newcommand{\ea}{\end{eqnarray}}
\newcommand{\bal}{\begin{align}}
\newcommand{\eal}{\end{align}}
\newcommand{\bay}[1]{\left(\begin{array}{#1}}
\newcommand{\eay}{\end{array}\right)}

\newcommand{\hide}[1]{}

\newcommand{\fsl}[1]{\ensuremath{\mathrlap{\!\not{\phantom{#1}}}#1}}

\newtheorem{proposition}{Proposition}

\newlist{axioms}{enumerate}{2}
\setlist[axioms,1]{label=\textbf{A\arabic{axiomsi}.}, ref=A\arabic{axiomsi}}
\setlist[axioms,2]{label=\textbf{A\arabic{axiomsi}\rlap{\myEnumCounter{axiomsii}}.},%
                   ref=A\arabic{axiomsi}\myEnumCounter{axiomsii},%
                   align=parleft,%
                   leftmargin=0em,%
                   itemsep=1.4ex,%
                   before={\stepcounter{axiomsi}}}

  \usetikzlibrary{decorations.markings}

\begin{document}
\begin{titlepage}
%\begin{flushright}
%NORDITA 2019-102 %\\
%October,  2019
%\end{flushright}

\begin{center}

\textbf{\LARGE
Chemical Potential and\\ 
Analytic Continuation for\\ 
Non-Hermitian Lattice Fermions
\vskip.3cm
}
\vskip .5in
{\large
Chen-Te Ma$^{a}$ \footnote{e-mail address: yefgst@gmail.com} 
and Hui Zhang$^{b,c}$ \footnote{e-mail address: Mr.zhanghui@m.scnu.edu.cn}
\\
\vskip 1mm
}
{\sl
$^a$ 
Centre for Cosmology and Science Popularization, \\
Shree Guru Gobind Singh Tricentenary University, Gurugram, Haryana 122505, India.
\\
$^b$
State Key Laboratory of Nuclear Physics and Technology,\\ 
Institute of Quantum Matter, South China Normal University, Guangzhou 510006, Guangdong, China.
\\
$^c$
Guangdong Basic Research Center of Excellence for Structure and Fundamental Interactions of Matter, 
Guangdong Provincial Key Laboratory of Nuclear Science, Guangzhou 510006, Guangdong, China. 
}\\
\vskip 1mm
\vspace{40pt}
\end{center}

%\newpage
\begin{abstract}
\noindent
We introduce a chemical potential for non-Hermitian lattice fermions and show that, for even flavors with degenerate masses and paired chemical potentials $(\mu,-\mu)$ or $(i\mu,i\mu)$, the Hybrid Monte Carlo algorithm is free of the sign problem. 
For one-dimensional free fermions, we demonstrate that the sign problem is a numerical rather than physical obstruction and derive the exact propagator, which is analytic at finite lattice spacing away from its poles but becomes non-analytic in the continuum limit. Finally, we use AI-assisted fitting to perform analytic continuation from imaginary to real chemical potentials.

\end{abstract}
\end{titlepage}
\setcounter{tocdepth}{3}
{\hypersetup{linkcolor=black}\tableofcontents}
\newpage
\section{Introduction}
\label{sec:1}
\noindent
Lattice field theory introduces a finite lattice spacing to regularize the ultraviolet divergences that arise in the continuum path-integral formulation of quantum field theory (QFT) \cite{Feynman:1948ur,Wilson:1974sk,Makeenko:2002uj,Gattringer:2010zz}.
It provides a well-defined non-perturbative regularization of QFT. 
It offers a powerful framework for studying strongly coupled systems \cite{Ma:2024zbl}.
For consistency, the lattice theory is expected to reproduce the desired continuum QFT in the continuum limit, where the lattice spacing vanishes and the lattice volume becomes infinite.
However, establishing this connection is highly nontrivial.
For lattice fermions, non-perturbative lattice-spacing effects, such as those underlying the fermion doubling problem, must be properly taken into account rather than treated as small perturbative corrections \cite{Nielsen:1980rz,Nielsen:1981xu,Karsten:1981gd}.
Furthermore, the Euclidean lattice theory does not generally admit a straightforward Wick rotation \cite{Wick:1954eu} to its Lorentzian counterpart, making the analytic continuation itself a subtle issue \cite{Kosyakov:2026nrs}.
These challenges complicate the extraction of continuum QFT from lattice formulations.
\\

\noindent
The fermion-doubling theorem states that a local, Hermitian lattice fermion theory with chiral symmetry inevitably contains additional unwanted fermion species (doublers) in the continuum limit \cite{Nielsen:1980rz,Nielsen:1981xu,Karsten:1981gd}.
A standard solution is to introduce the Wilson term, which explicitly breaks chiral symmetry and gives the doublers masses of order the inverse lattice spacing, causing them to decouple in the continuum limit \cite{Kogut:1974ag}.
Alternatively, one can preserve a modified lattice chiral symmetry through the Ginsparg--Wilson relation \cite{Ginsparg:1981bj}, whose overlap fermion solution avoids doubling while maintaining an exact lattice chiral symmetry \cite{Luscher:1998pqa,Neuberger:1998wv}.
However, the overlap operator is intrinsically non-local, making numerical simulations computationally expensive.
Another approach employs one-sided lattice derivatives, which eliminate doublers at the cost of Hermiticity and hypercubic symmetry \cite{Stamatescu:1993ga,Stamatescu:1994yj}.
The breaking of hypercubic symmetry leads to divergent continuum observables \cite{Sadooghi:1996ip}.
This problem can be remedied by quenched averaging over all lattice directions, which restores hypercubic symmetry in physical observables and yields the correct continuum limit \cite{Stamatescu:1993ga,Stamatescu:1994yj}.
Although non-Hermiticity appears to preclude conventional Monte Carlo simulations, pairing forward and backward lattice derivatives produces a non-negative fermion determinant, allowing the introduction of pseudofermions and {\it enabling} Monte Carlo algorithms for non-Hermitian lattice fermions \cite{Guo:2021sjp,Guo:2024jqt}.
\\

\noindent
In continuum quantum field theory (QFT), Lorentzian correlation functions are commonly obtained from the Euclidean theory through Wick rotation \cite{Schwinger:1958qau,Osterwalder:1974tc}.
On the lattice, however, this correspondence is not generic \cite{Kosyakov:2026nrs}.
For lattice scalar fields, the pole structure of the Green's function can change under Wick rotation, preventing analytic continuation before the continuum limit is taken \cite{Kosyakov:2026nrs,Schwinger:1951ex}.
Although the Euclidean lattice theory possesses a well-defined path integration measure, the failure of analytic continuation explains why an equally well-defined Lorentzian measure is generally absent \cite{Kosyakov:2026nrs}.
Consequently, analytic continuation and the continuum limit do not generally commute, leading to inequivalent theories \cite{Kosyakov:2026nrs}.
Despite these subtleties, analytic continuation remains an {\it indispensable} tool for investigating non-perturbative phenomena that are otherwise inaccessible \cite{Lombardo:1999cz,DElia:2002tig,Braun:2012ww}.
A notable example is the search for the QCD critical endpoint, where lattice simulations are often performed at imaginary chemical potential to {\it avoid} the sign problem and then {\it analytically continued} to real chemical potential \cite{Lombardo:1999cz,DElia:2002tig,Braun:2012ww,Brandt:2017oyy,Hasenfratz:1983ba}.
The reliability of this approach depends crucially on the {\it analyticity} of the partition function and observables as functions of the chemical potential, which is generally difficult to establish \cite{Lombardo:1999cz,DElia:2002tig,Brandt:2017oyy}.
Consequently, analytic continuation alone {\it cannot} yet provide a definitive determination of the QCD phase structure.
Since the one-dimensional free lattice fermion with a non-Hermitian discretization admits an {\it exact} solution, it provides an ideal setting for examining the validity of analytic continuation explicitly.
Understanding when analytic continuation is justified is essential for exploring non-perturbative regimes that remain inaccessible to direct numerical methods, with implications extending to a wide range of quantum field theories and related areas of fundamental physics \cite{Giveon:1994fu,Ma:2018efs,Ma:2023krt,Ma:2025zaz}.

\subsection{Summary of Results}
\noindent
In this paper, we investigate analytic continuation with respect to the chemical potential in lattice field theory.
Finite-density lattice simulations generally suffer from the sign problem, making analytic continuation from imaginary to real chemical potentials one of the few practical approaches for exploring the phase diagram.
However, the validity of this procedure relies on the analyticity of the underlying observables, which is difficult to establish in general. To address this issue, we study a one-dimensional free Dirac fermion on a lattice, for which all relevant quantities can be computed exactly.
This exactly solvable model enables us to explicitly examine the validity of analytic continuation and clarify its behavior in finite-density lattice field theory. 
Our main results are summarized as follows:
\begin{itemize}
\item We derive the exact two-point fermion correlation functions with both the naive and exponential implementations of the chemical potential.
We show that, at finite lattice spacing, analytic continuation can be performed.
At the same time, taking the infinite-volume limit can destroy analyticity.
Consequently, the continuum limit and analytic continuation do not generally commute.
Moreover, for the exponential implementation, the exact solution shows that the chemical potential does not affect the continuum limit, providing further justification for adopting the exponential prescription on the lattice.

\item We extend the Hybrid Monte Carlo (HMC) algorithm for non-Hermitian lattice fermions to finite chemical potential by considering degenerate fermion masses and the flavor pairs $(\mu_1,\mu_2)=(\mu,-\mu)$ and $(i\mu, i\mu)$, where $\mu$ is real. We compute the two-point pseudofermion correlation functions and find good agreement between the HMC simulations and the exact solutions for lattice sizes $N_t=16$ and $32$.

\item We analyze the eigenvalue spectrum of the Dirac operator and show that the sign problem is a technical obstruction to introducing pseudofermions rather than an indication of an ill-defined fermionic theory.
For two degenerate flavors, the configurations studied here show a correspondence between the sign problem and eigenvalues with negative real parts.

\item We perform AI-assisted analytic continuation from imaginary to real chemical potentials using the Laurent exponential model and the physics-constrained neural network (PCNN).
The reconstructed correlation functions agree well with the exact results while requiring only a small amount of training data, thereby providing a benchmark for the method across different lattice sizes.
\end{itemize}

\noindent
The remainder of this paper is organized as follows.
In Sec.~\ref{sec:2}, we derive the exact solution for non-Hermitian lattice fermions with chemical potential.
In Sec.~\ref{sec:3}, we formulate the HMC algorithm for degenerate fermions with specific chemical-potential assignments.
In Sec.~\ref{sec:4}, we analyze the eigenvalue spectrum of the Dirac operator and discuss its relation to the sign problem.
In Sec.~\ref{sec:5}, we compare the HMC results for the two-point pseudofermion correlation functions with the exact solutions.
In Sec.~\ref{sec:6}, we present the AI-assisted analytic continuation based on the Laurent exponential model and the PCNN.
Finally, Sec.~\ref{sec:7} contains our conclusions and outlook.

%The main results of our work are presented in Fig. \ref{summary.png}.
%\begin{figure}[tbp]
%\centering
%\includegraphics[scale = 1]{fig-Z23.pdf}
%\caption{Graphical notation for the tripartite wavefunction $\Psi$, its conjugate $\Psi^*$, the reduced density matrix $\rho_{BC}$, and the construction of $\mathcal{Z}^{(\mathtt{q})}_{2}$ in the special case $\mathtt{q} = 3$. }
%\label{fig:Z23}
%\end{figure}

\section{Non-Hermitian Lattice Fermion with Chemical Potential}
\label{sec:2}
\noindent
We begin with the 1D free Dirac fermion theory at finite chemical potential.
We then introduce the non-Hermitian lattice formulation \cite{Stamatescu:1993ga} and implement the chemical potential in both the naive and exponential forms \cite{Hasenfratz:1983ba}.
The exact propagator shows that, on a finite lattice, analytic continuation is valid.
In the continuum limit, however, the analyticity domain becomes restricted by the limiting pole structure.

\subsection{Free Dirac Fermion}
\noindent
We introduce the 1D free Dirac fermion theory with a chemical potential
\bea
S_{\mathrm{1D}}\lbrack\bar{\psi}, \psi\rbrack
=
\int dx\ \big(\bar{\psi}(\fsl{\partial}+m-\mu_1\gamma_1)\psi\big),
\eea
where $m$ is the Dirac fermion mass, and $\mu_1$ is the chemical potential.
Our convention for the $\gamma$ matrix is:
\bea
\gamma_1=\sigma_z=\begin{pmatrix}
1&0
\\
0&-1
\end{pmatrix}.
\eea
The propagator is
\bea
S(x)=\int^{\infty}_{-\infty}\frac{dp}{2\pi}\frac{e^{ipx}}{i\gamma_1p+m-\gamma_1 \mu_1}.
\eea
The propagator has two diagonal elements, while the off-diagonal elements vanish.
The first diagonal element of the propagator is
\bea
S_+(x)=\int^{\infty}_{-\infty}\frac{dp}{2\pi}\ \frac{e^{ipx}}{ip+m-\mu_1}
=\Bigg\{\begin{array}{ll}
\theta(x)e^{-(m-\mu_1)x}, & \mathrm{Re}(m-\mu_1)>0 \\
-\theta(-x)e^{-(m-\mu_1)x}, & \mathrm{Re}(m-\mu_1)<0
\end{array},
\eea
where
\bea
\theta(x)\equiv\Bigg\{\begin{array}{ll}
1, & x\ge 0 \\
0, & x<0
\end{array}.
\eea
The second diagonal element is
\bea
S_-(x)=\int^{\infty}_{-\infty}\frac{dp}{2\pi}\ \frac{e^{ipx}}{-ip+m+\mu_1}
=\Bigg\{\begin{array}{ll}
\theta(-x)e^{(m+\mu_1)x}, & \mathrm{Re}(m+\mu_1)>0 \\
-\theta(x)e^{(m+\mu_1)x}, & \mathrm{Re}(m+\mu_1)<0
\end{array}.
\eea
For a purely imaginary chemical potential, the exact solution lies in the region $\mathrm{Re}(m\mp\mu_1)>0$.
When the real chemical potential is continued beyond this region, i.e., when $\mathrm{Re}(m\mp\mu_1)<0$, the analytic continuation from imaginary to real chemical potential fails.
Thus, the analytic continuation is valid only in the region
\bea
m>\mathrm{Re}(\mu_1)>-m.
\eea

\subsection{Non-Hermitian Lattice Formulation}
\noindent
We now consider the lattice theory obtained from the forward finite-difference
\bea
S_{\mathrm{F}}&=&a\sum_{n=0}^{N_t-1}\bar{\psi}(n)\bigg(\gamma_1\frac{\psi(n+1)-\psi(n)}{a}+m\psi(n)-\mu_1\gamma_1\psi(n)\bigg)
\nn\\
&=&a\sum_{n_1, n_2; \alpha_1, \alpha_2}\bar{\psi}(n_1)_{\alpha_1}\big(D(n_1, n_2)_{\alpha_1, \alpha_2}+m\delta_{n_1, n_2}\delta_{\alpha_1, \alpha_2}
-\mu_1(\gamma_1)_{\alpha_1, \alpha_2}\delta_{n_1, n_2}
\big)\psi(n_2)_{\alpha_2},
\nn\\
\eea
where
\bea
D(n_1, n_2)_{\alpha_1, \alpha_2}\equiv
(\gamma_1)_{\alpha_1, \alpha_2}\frac{\delta_{n_1+1, n_2}-\delta_{n_1, n_2}}{a}.
\eea
We label the matrix components of $\gamma_1$ by $\alpha_1, \alpha_2=1, 2$.
$N_t$ is the number of lattice points.
The lattice fermion field satisfies the anti-periodic boundary condition
\bea
\psi(0)=-\psi(N_t).
\eea
\\

\noindent
For $x\equiv na>0$, the lattice propagator is
\bea
&&
S_{L}(x)
\nn\\
&=&\frac{1}{N_t}\sum_{j=0}^{N_t-1}\exp\bigg(i\frac{(2j+1)\pi}{N_t}n\bigg)
\begin{pmatrix}
\frac{1}{\exp\big(i\frac{(2j+1)\pi}{N_t}\big)-1+(m-\mu_1)a}&0
\\
0&\frac{1}{-\exp\big(i\frac{(2j+1)\pi}{N_t}\big)+1+(m+\mu_1)a}
\end{pmatrix}
\nn\\
&=&
\frac{1}{N_t}\oint_{C_1} dw\ \exp\bigg(i\frac{2w\pi}{N_t}n\bigg)
\begin{pmatrix}
\frac{\frac{-1}{\exp(2\pi i w)+1}}{\exp\big(i\frac{2w\pi}{N_t}\big)-1+(m-\mu_1)a}&0
\\
0&\frac{\frac{-1}{\exp(2\pi i w)+1}}{-\exp\big(i\frac{2w\pi}{N_t}\big)+1+(m+\mu_1)a}
\end{pmatrix}
\nn\\
&=&
\begin{pmatrix}
\frac{\big(1-(m-\mu_1)a\big)^{n-1}}{\big(1-(m-\mu_1)a\big)^{N_t}+1}& 0
\\
0& -\frac{\big(1+(m+\mu_1)a\big)^{n-1}}{\big(1+(m+\mu_1)a\big)^{N_t}+1}
\end{pmatrix}.
\eea
The closed contour $C_1$ encloses the poles $w=1/2, 3/2 \cdots, N_t-1/2$; see Fig. \ref{closed_loop_C1}.
For $x<0$, we replace $\exp(2\pi i w)$  with $\exp(-2\pi i w)$ to avoid a divergent boundary contribution.
The lattice propagator becomes:
\bea
S_{L}(x)
&=&
\frac{1}{N_t}\oint_{C_1} dw\ \exp\bigg(i\frac{2w\pi}{N_t}n\bigg)
\begin{pmatrix}
\frac{\frac{1}{\exp(-2\pi i w)+1}}{\exp\big(i\frac{2w\pi}{N_t}\big)-1+(m-\mu_1)a}&0
\\
0&\frac{\frac{1}{\exp(-2\pi i w)+1}}{-\exp\big(i\frac{2w\pi}{N_t}\big)+1+(m+\mu_1)a}
\end{pmatrix}
\nn\\
&=&
\begin{pmatrix}
-\frac{\big(1-(m-\mu_1)a\big)^{n-1}}{\big(1-(m-\mu_1)a\big)^{-N_t}+1}& 0
\\
0& \frac{\big(1+(m+\mu_1)a\big)^{n-1}}{\big(1+(m+\mu_1)a\big)^{-N_t}+1}
\end{pmatrix}.
\eea
In the infinite-volume limit, for sufficiently small lattice spacing and $\mathrm{Re}(m\mp\mu_1)>0$, the lattice propagator becomes
\bea
S_{+}(x)&\rightarrow&\Bigg\{\begin{array}{ll}
\big(1-(m-\mu_1)a\big)^{n-1}, & x> 0 \\
0, & x<0
\end{array};
\nn\\
S_{-}(x)&\rightarrow&\Bigg\{\begin{array}{ll}
0, & x> 0 \\
\big(1+(m+\mu_1)a\big)^{n-1}, & x<0
\end{array}.
\eea
For $\mathrm{Re}(m\mp\mu_1)<0$, it becomes
\bea
S_{+}(x)&\rightarrow&\Bigg\{\begin{array}{ll}
0, & x> 0 \\
-\big(1+(m-\mu_1)a\big)^{n-1}, & x<0
\end{array};
\nn\\
S_{-}(x)&\rightarrow&\Bigg\{\begin{array}{ll}
-\big(1-(m+\mu_1)a\big)^{n-1}, & x> 0 \\
0, & x<0
\end{array},
\eea
The non-physical contribution vanishes in the infinite-volume limit.
We then take the continuum limit
\bea
(m-\mu_1)a\rightarrow 0; \ \frac{a}{x}\rightarrow 0.
\eea
After taking the continuum limit, the lattice propagator reduces to that of a 1D free Dirac fermion.
Our lattice solution shows that the propagator has the same analytic form for any value of $\mu_1$ on a finite lattice.
Therefore, analytic continuation works on a finite lattice.
However, this property is lost in the infinite-volume limit.
\begin{figure}[tbp]
\centering
\includegraphics[scale = 0.28]{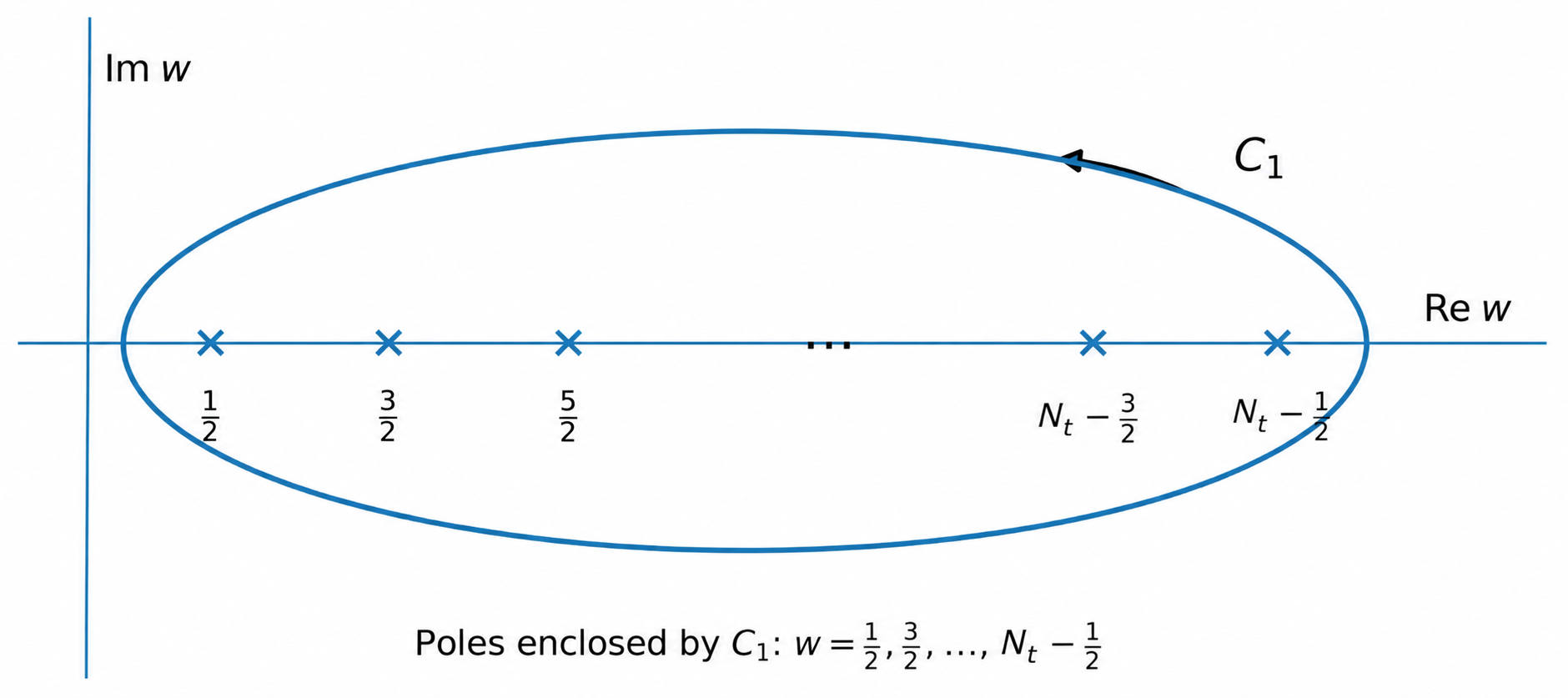}
\caption{Closed contour $C_1$ in the complex $w$-plane.
The contour is oriented counterclockwise and encloses the $N_t$ simple poles located at $w=(2j+1)/2$ with $j=0, 1, \cdots, N_t-1$, namely $w=1/2, 3/2, \cdots, N_t-1/2$.
}
\label{closed_loop_C1}
\end{figure}

\subsection{Exponential Form}
\noindent
We now consider the lattice theory obtained from the forward finite-difference
\bea
S_{\mathrm{F}}&=&a\sum_{n=0}^{N_t-1}\bar{\psi}(n)\bigg(\gamma_1\frac{e^{-a\mu_1}\psi(n+1)-\psi(n)}{a}+m\psi(n)\bigg)
\nn\\
&=&a\sum_{n_1, n_2; \alpha_1, \alpha_2}\bar{\psi}(n_1)_{\alpha_1}\big(D(\mu_1, n_1, n_2)_{\alpha_1, \alpha_2}+m\delta_{n_1, n_2}\delta_{\alpha_1, \alpha_2}
\big)\psi(n_2)_{\alpha_2},
\eea
where
\bea
D(\mu_1, n_1, n_2)_{\alpha_1, \alpha_2}\equiv
(\gamma_1)_{\alpha_1, \alpha_2}\frac{e^{-a\mu_1}\delta_{n_1+1, n_2}-\delta_{n_1, n_2}}{a}.
\eea
For $x\equiv na>0$, the lattice propagator is
\bea
&&
S_{L}(x)
\nn\\
&=&\frac{1}{N_t}\sum_{j=0}^{N_t-1}\exp\bigg(i\frac{(2j+1)\pi}{N_t}n\bigg)
\nn\\
&&\times
\begin{pmatrix}
\frac{1}{\exp(-\mu_1 a)\exp\big(i\frac{(2j+1)\pi}{N_t}\big)-1+ma}&0
\\
0&\frac{1}{-\exp(-\mu_1 a)\exp\big(i\frac{(2j+1)\pi}{N_t}\big)+1+ma}
\end{pmatrix}
\nn\\
&=&
\frac{1}{N_t}\oint_{C_1} dw\ \exp\bigg(i\frac{2w\pi}{N_t}n\bigg)
\begin{pmatrix}
\frac{\frac{-1}{\exp(2\pi i w)+1}}{\exp(-\mu_1 a)\exp\big(i\frac{2w\pi}{N_t}\big)-1+ma}&0
\\
0&\frac{\frac{-1}{\exp(2\pi i w)+1}}{-\exp(-\mu_1 a)\exp\big(i\frac{2w\pi}{N_t}\big)+1+ma}
\end{pmatrix}
\nn\\
&=&e^{n\mu a}
\begin{pmatrix}
\frac{(1-ma)^{n-1}}{e^{N_t\mu_1 a}(1-ma)^{N_t}+1}& 0
\\
0& -\frac{(1+ma)^{n-1}}{e^{N_t\mu_1 a}(1+ma)^{N_t}+1}
\end{pmatrix}.
\eea
The closed contour $C_1$ also encloses the poles $w=1/2, 3/2 \cdots, N_t-1/2$.
\\

\noindent
For $x<0$, we again replace $\exp(2\pi i w)$  with $\exp(-2\pi i w)$ to avoid a divergent boundary contribution.
The lattice propagator becomes:
\bea
&&
S_{L}(x)
\nn\\
&=&
\frac{1}{N_t}\oint_{C_1} dw\ \exp\bigg(i\frac{2w\pi}{N_t}n\bigg)
\begin{pmatrix}
\frac{\frac{1}{\exp(-2\pi i w)+1}}{\exp(-\mu_1 a)\exp\big(i\frac{2w\pi}{N_t}\big)-1+ma}&0
\\
0&\frac{\frac{1}{\exp(-2\pi i w)+1}}{-\exp(-\mu_1 a)\exp\big(i\frac{2w\pi}{N_t}\big)+1+ma}
\end{pmatrix}
\nn\\
&=&e^{n\mu_1 a}
\begin{pmatrix}
-\frac{(1-ma)^{n-1}}{e^{-N_t\mu_1 a}(1-ma)^{-N_t}+1}& 0
\\
0& \frac{(1+ma)^{n-1}}{e^{-N_t\mu_1 a}(1+ma)^{-N_t}+1}
\end{pmatrix}.
\eea
In the infinite-volume limit and for infinitesimal lattice spacing, with $\mathrm{Re}(m\mp\mu_1)>0$, the lattice propagator becomes
\bea
S_{+}(x)&\rightarrow&\Bigg\{\begin{array}{ll}
e^{-(m-\mu_1)x}, & x> 0 \\
0, & x<0
\end{array};
\nn\\
S_{-}(x)&\rightarrow&\Bigg\{\begin{array}{ll}
0, & x> 0 \\
e^{(m+\mu_1)x}, & x<0
\end{array},
\eea
For $\mathrm{Re}(m\mp\mu_1)<0$, it becomes
\bea
S_{+}(x)&\rightarrow&\Bigg\{\begin{array}{ll}
0, & x> 0 \\
-e^{-(m-\mu_1)x}, & x<0
\end{array};
\nn\\
S_{-}(x)&\rightarrow&\Bigg\{\begin{array}{ll}
-e^{(m+\mu_1)x}, & x> 0 \\
0, & x<0
\end{array},
\eea
After taking the continuum limit, the lattice propagator reduces to that of the 1D free Dirac fermion theory.
Our lattice solution again shows that analytic continuation is valid.
The infinite-lattice limit breaks analyticity.
Because the chemical potential is introduced in exponential form, reducing the continuum field theory requires only
\bea
ma\rightarrow 0; \ \frac{a}{x}\rightarrow 0.
\eea
Thus, no additional constraint on the chemical potential is required in the continuum limit.
The exponential prescription is also convenient when gauge fields are included in a consistent lattice formulation.
We therefore use the exponential form for the chemical potential on a finite lattice \cite{Hasenfratz:1983ba}.

\section{Implementation of Monte Carlo Algorithm}
\label{sec:3}
\noindent
Although we consider a single Dirac fermion, the non-Hermiticity of the lattice formulation \cite{Stamatescu:1993ga} prevents a direct Monte Carlo treatment.
We therefore consider an even number of flavors with degenerate mass \cite{Guo:2021sjp}.
The zero-chemical-potential implementation is already established \cite{Guo:2021sjp}, and we generalize the same approach to nonzero chemical potential.
\\

\noindent
For two fermion fields in 1D with a degenerate mass, the lattice action is
\bea
&&
S_{FD}
\nn\\
&=&a\sum_{n_1, n_2=0}^{N_t-1}\bigg(\bar{\psi}_1(n_1)\big(D_+(\mu_1, n_1, n_2)+m\delta_{n_1, n_2}\big)\psi_1(n_2)
\nn\\
&&+
\bar{\psi}_2(n_1)\big(D_-(\mu,_2, n_1, n_2)+m\delta_{n_1, n_2}\big)\psi_2(n_2)\bigg).
\eea
Here we adopt the forward finite-difference for $\psi_1$,
\bea
D_+(\mu_1, n_1, n_2)_{\alpha_1, \alpha_2}\equiv
(\gamma_1)_{\alpha_1, \alpha_2}\frac{e^{-a\mu_1}\delta_{n_1+1, n_2}-\delta_{n_1, n_2}}{a}.
\eea
For $\psi_2$, we use the backward finite difference with the same accuracy,
\bea
D_-(\mu_2, n_1, n_2)_{\alpha_1, \alpha_2}\equiv
(\gamma_1)_{\alpha_1, \alpha_2}\frac{\delta_{n_1, n_2}-e^{a\mu_2}\delta_{n_1-1, n_2}}{a}.
\eea
At zero chemical potential, the backward finite-difference can be written as
\bea
D_-(0, n_1, n_2)=-\big(D_+(0, n_1, n_2)\big)^{\dagger}.
\label{BF}
\eea
We specifically consider ($\mu_1$, $\mu_2$)=($\mu$, $-\mu$) and ($i\mu$, $i\mu$), where $\mu$ is real.
For these choices, the backward difference satisfies the analog of Eq. \eqref{BF}
\bea
D_-(\mu_2, n_1, n_2)=-\big(D_+(\mu_1, n_1, n_2)\big)^{\dagger}.
\eea
After integrating out the fermion fields, we obtain a non-negative determinant:
\bea
&&
\det\big(D_+(\mu_1, n_1, n_2)+m\big)\det\bigg(-\big(D_+(\mu_1, n_1, n_2)\big)^{\dagger}+m\bigg)
\nn\\
&=&\det\big(D_+(\mu_1, n_1, n_2)+m\big)\det\Bigg(\gamma_5\bigg(-\big(D_+(\mu_1, n_1, n_2)\big)^{\dagger}+m\bigg)\gamma_5\Bigg)
\nn\\
&=&\big|\det\big(D_+(\mu_1, n_1, n_2)+m\big)\big|^2.
\eea
Introducing the pseudo-fermion field $\phi_f$ allows us to rewrite the partition function as
\bea
&&
\int {\cal D}\bar{\psi}{\cal D}\psi\ \exp(-S_{FD})
\sim
\int {\cal D}\phi_{f, R}{\cal D}\phi_{f, I}\ \exp\bigg(-\phi_f^{\dagger}\big((D_++m)(D_+^{\dagger}+m)\big)^{-1}\phi_f\bigg),
\nn\\
\eea
where
\bea
\phi_f\equiv \phi_{f, R}+i\phi_{f, I}.
\eea
Thus, the pseudo-fermion action is Hermitian.
The Monte Carlo algorithm can be applied to the non-Hermitian lattice formulation because the Hermiticity of the original Dirac operator is not required; only a non-negative partition function is required.

\section{Eigenvalues of Dirac Matrices}
\label{sec:4}
\noindent
For the Gaussian path integral over pseudo-fermions to converge in the standard representation, the relevant quadratic operator must have eigenvalues with non-negative real parts.
The Dirac operator itself, however, need not satisfy this condition: its square is related to the Klein-Gordon operator, and negative real parts of Dirac eigenvalues do not by themselves imply an ill-defined fermionic theory.
The eigenvalue results show negative real parts for $N_t=16$ and $32$, as shown in Figs. \ref{eigenvalues1} and \ref{eigenvalues2}.
Because both the lattice propagator and its continuum limit are well defined, these negative real parts should not be interpreted as evidence that the fermionic theory is physically pathological.
For the even-flavor theory, we compute the eigenvalues relevant to the pseudo-fermion path integral.
When the chemical potentials are $(\mu_1, \mu_2)=(\mu, \mu), (i\mu, -i\mu)$, the simulation suffers from the sign problem.
For the other chemical-potential assignments, $(\mu_1, \mu_2)=(\mu, -\mu), (i\mu, i\mu)$, the simulation is free of the sign problem.
%These four considerations are given in Tab. \ref{sign}.
The eigenvalue results (Figs. \ref{eigenvalues3} and \ref{eigenvalues4}) show that, for the configurations studied here, the occurrence of a sign problem coincides with the presence of an eigenvalue with a negative real part.
\begin{figure}[tbp]
\centering
\includegraphics[scale = 1.2]{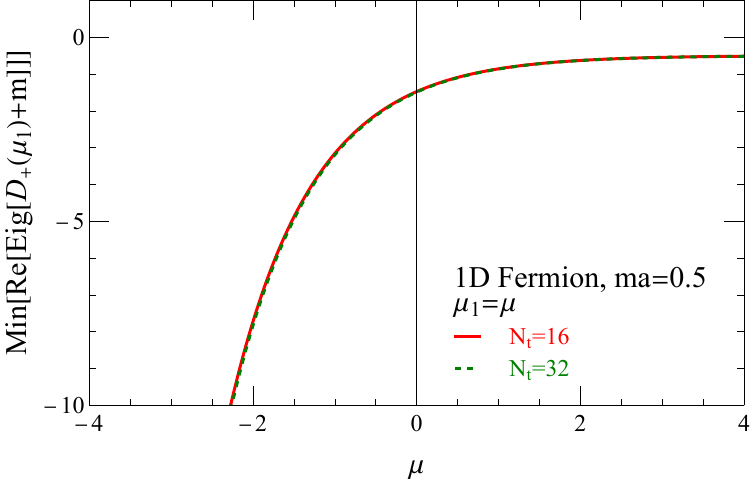}
\includegraphics[scale = 1.2]{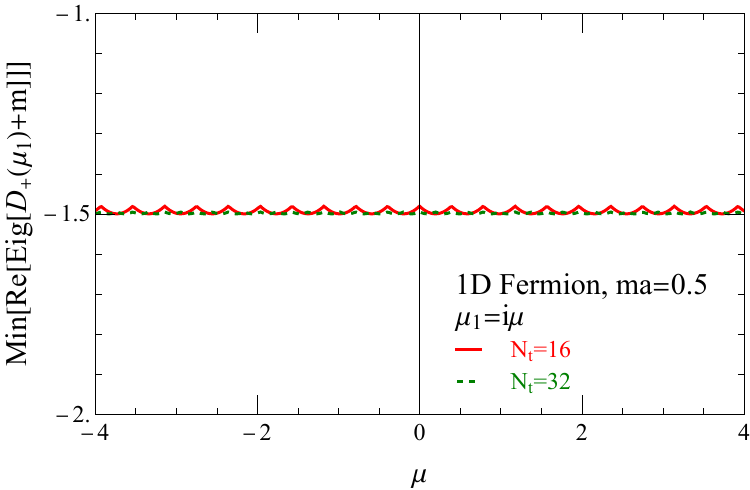}
\caption{The smallest real part of eigenvalues of a Dirac matrix corresponding to the forward derivative is not positive definite for $N_t=16, 32$ with $ma=0.5$.
}
\label{eigenvalues1}
\end{figure}
\begin{figure}[tbp]
\centering
\includegraphics[scale = 1.2]{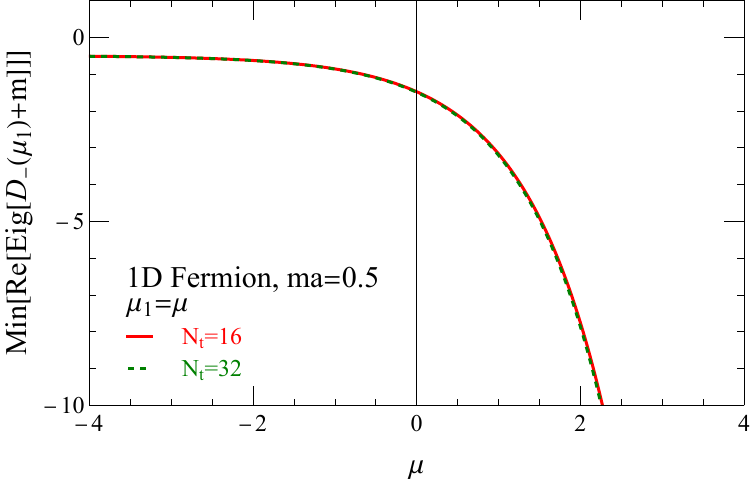}
\includegraphics[scale = 1.2]{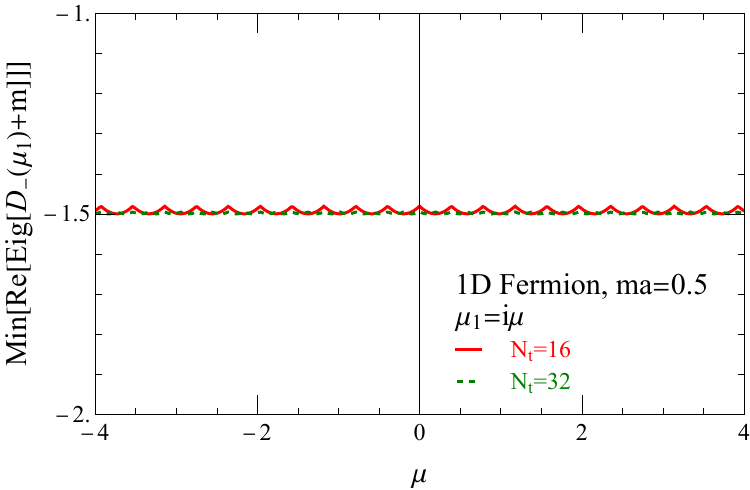}
\caption{The smallest real part of eigenvalues of a Dirac matrix corresponding to the backward derivative is not positive definite for $N_t=16, 32$ with $ma=0.5$.
}
\label{eigenvalues2}
\end{figure}
%\begin{table}[tbp]
%\centering
%\begin{tabular}{|c|c|c|c|c|}
%\hline\hline
%$\mu_1$             & $(+,\operatorname{Re})$ & $(+,\operatorname{Re})$ & $(+,\operatorname{Im})$ & $(+,\operatorname{Im})$ \\
%$\mu_2$             & $(+,\operatorname{Re})$ & $(-,\operatorname{Re})$ & $(+,\operatorname{Im})$ & $(-,\operatorname{Im})$ \\
%\hline
%Sign Problem & Yes      & No       & No       & Yes \\
%\hline\hline
%\end{tabular}
%\caption{The $+$ and $-$ denote positive and negative signs for the chemical potentials $(\mu_1, \mu_2)$, while $\mathrm{Re}$ and $\mathrm{Im}$ denote real and imaginary quantities, respectively.
%The "Yes" and "No" denote whether the simulation suffers from the sign problem.
%}
%\label{sign}
%\end{table}
\begin{figure}[tbp]
\centering
\includegraphics[scale = 1.2]{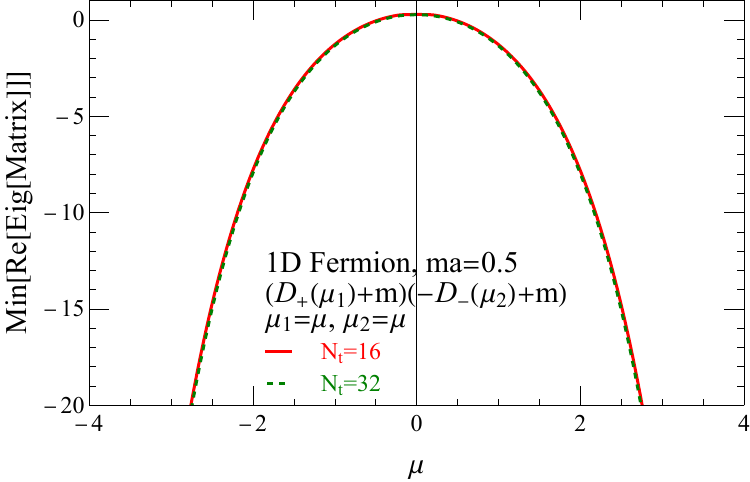}
\includegraphics[scale = 1.2]{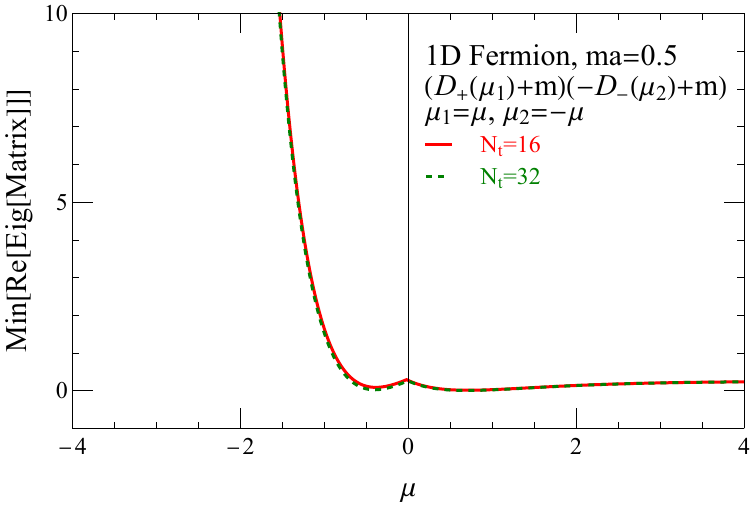}
\caption{
When the simulation suffers from the sign problem, the smallest real part of the eigenvalues of the Dirac matrices corresponding to the forward+backward derivative for real chemical potentials is not positive definite for $N_t=16, 32$ with $ma=0.5$.
}
\label{eigenvalues3}
\end{figure}
\begin{figure}[tbp]
\centering
\includegraphics[scale = 1.1]{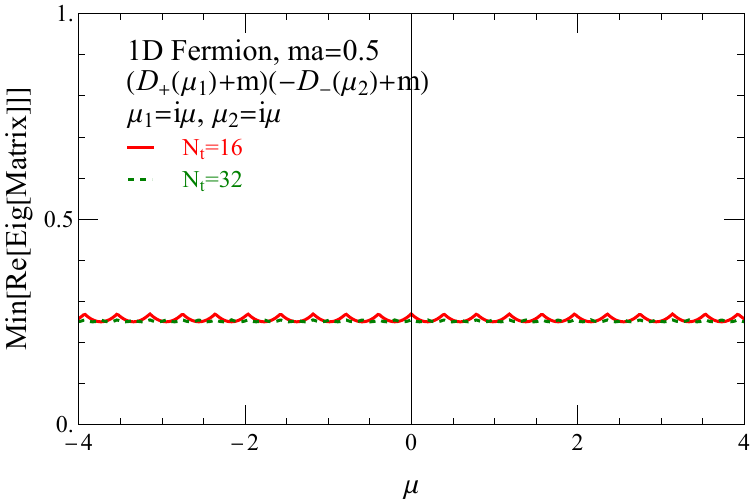}
\includegraphics[scale = 1.1]{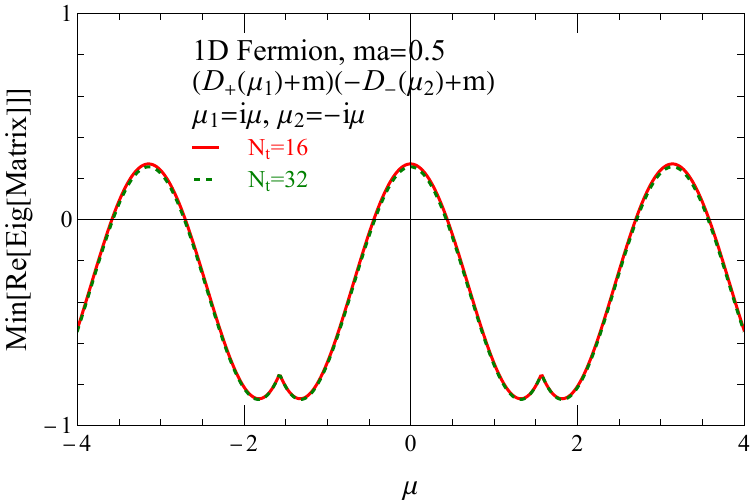}
\caption{
When the simulation suffers from the sign problem, the smallest real part of the eigenvalues of Dirac matrices corresponding to the forward+backward derivative for the imaginary chemical potentials is not always positive definite for $N_t=16, 32$ with $ma=0.5$.
}
\label{eigenvalues4}
\end{figure}

\newpage
\section{Lattice Simulation}
\label{sec:5}
\noindent
We implement the Hybrid Monte Carlo (HMC) algorithm for the chemical-potential assignments $(\mu_1, \mu_2)=(\mu, -\mu)$ and $(i\mu, i\mu)$ to calculate
\bea
{\cal O}_{jk}^{\alpha\beta}\equiv
\frac{1}{2}
\langle \phi^{\dagger \alpha}_{f, j}\phi^{\beta}_{f, k}
+\phi^{\dagger \beta}_{f, k}\phi^{\alpha}_{f,j}
\rangle=\big((D_++m)(D_++m)^{\dagger}\big)_{jk}^{\alpha\beta},
\eea
where $j, k=1, 2, \cdots, N_t$ and $\alpha, \beta=1, 2$.
We compare the exact solution with the numerical results for $N_t=16$ and $32$ in Figs. \ref{pf1}, \ref{pf2}, \ref{pf3}.
The agreement between the exact and numerical results indicates that lattice sizes $N_t=16$ and $32$ are sufficient to reproduce the behavior examined here. 
%Our analysis of the thermalization and autocorrelation time is given in Appendices \ref{sec:A} and \ref{sec:B}.
%According to the analysis, we consider thermalization for $2^6$ sweeps and measure intervals for $2^5$ sweeps.
%Including the chemical potentials in the simulation does not increase the required simulation resources for the thermalization and measurement intervals.
%Therefore, the non-Hermitian lattice formulation remains a practical method for lattice simulation.
\begin{figure}[tbp]
\centering
\includegraphics[scale = 0.4]{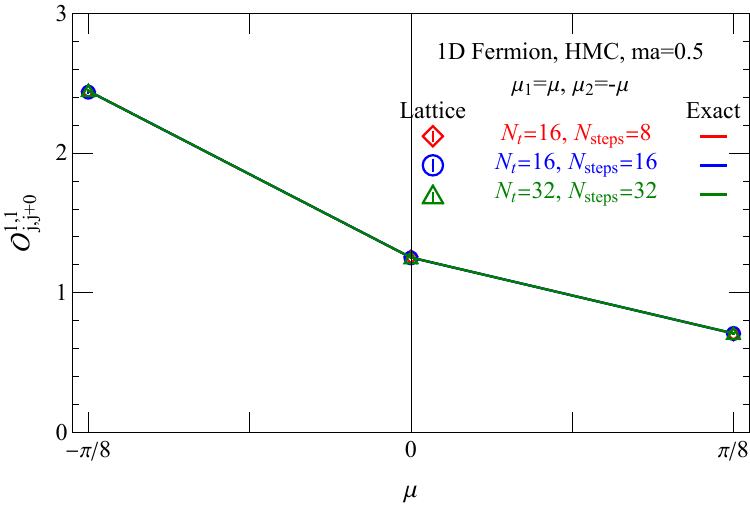}
\includegraphics[scale = 0.4]{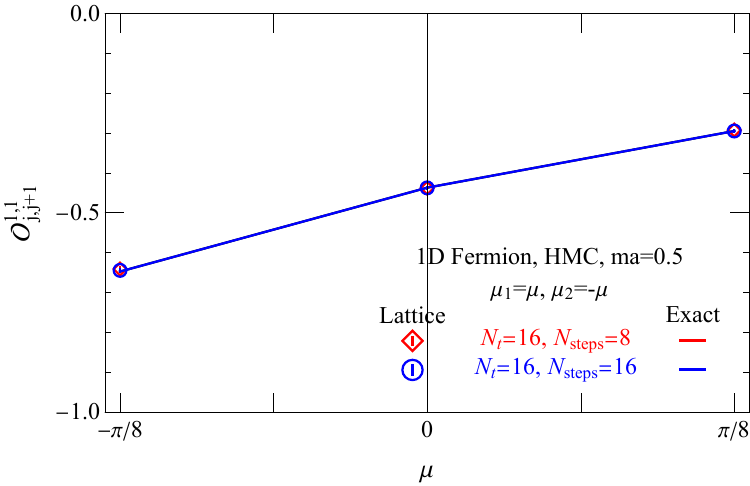}
\includegraphics[scale = 0.4]{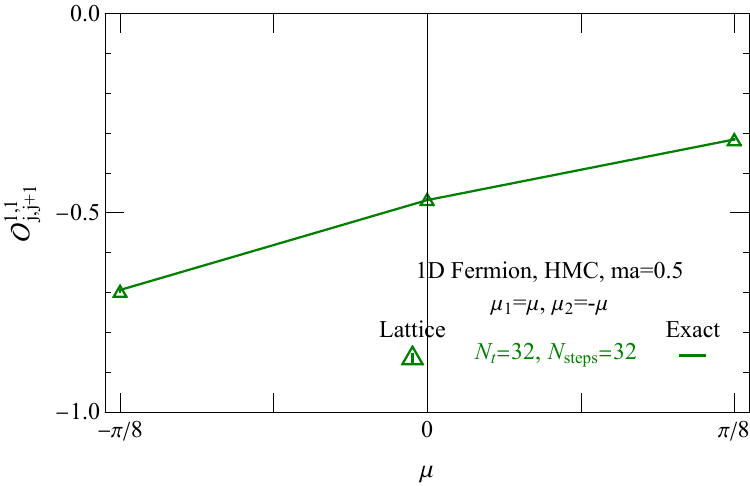}
\includegraphics[scale = 0.4]{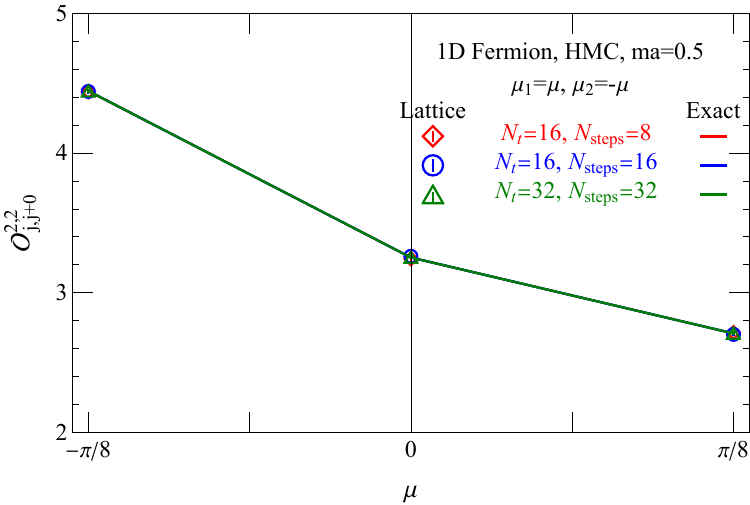}
\includegraphics[scale = 0.4]{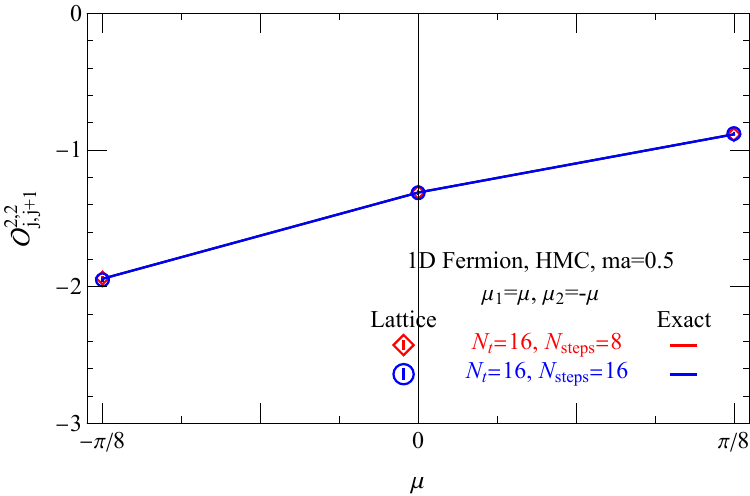}
\includegraphics[scale = 0.4]{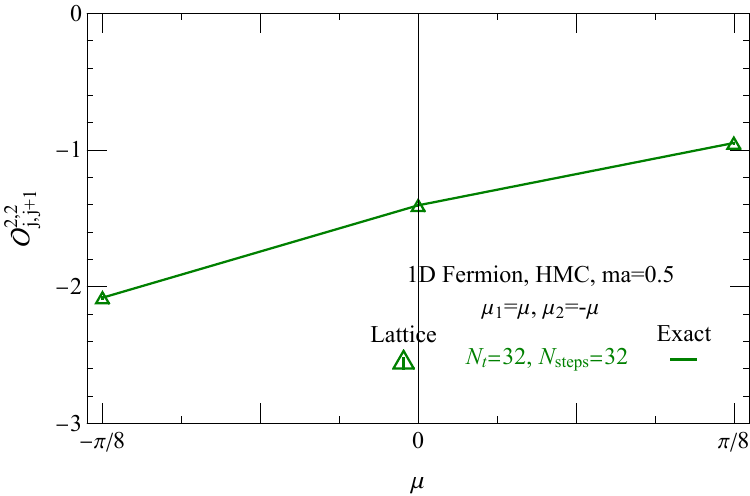}
\caption{
We use the Hybrid Monte Carlo (HMC) algorithm to compute $O^{\alpha_1,\alpha_2}_{j,j+n}$ for real chemical potentials $(\mu_1, \mu_2)=(\mu, -\mu)$ with $ma=0.5$, $(\alpha_1, \alpha_2)=(1, 1), (2, 2)$, and $n=0, 1$ for $N_t=16$ and $32$. 
The simulation uses $2^{14}$ measurement sweeps, $2^6$ thermalization sweeps, and a measurement interval of $2^5$ sweeps.
The error bars are less than $1\%$.
Here, $N_{\mathrm{steps}}$ denotes the number of molecular-dynamics steps.
}
\label{pf1}
\end{figure}
\begin{figure}[tbp]
\centering
\includegraphics[scale = 0.4]{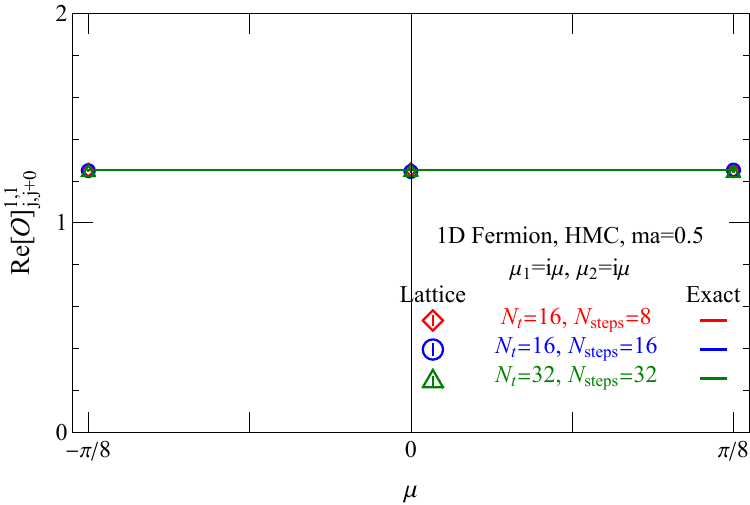}
\includegraphics[scale = 0.4]{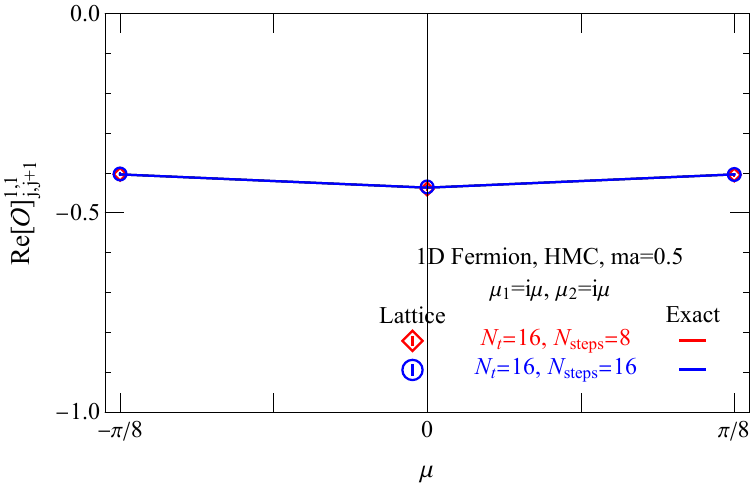}
\includegraphics[scale = 0.4]{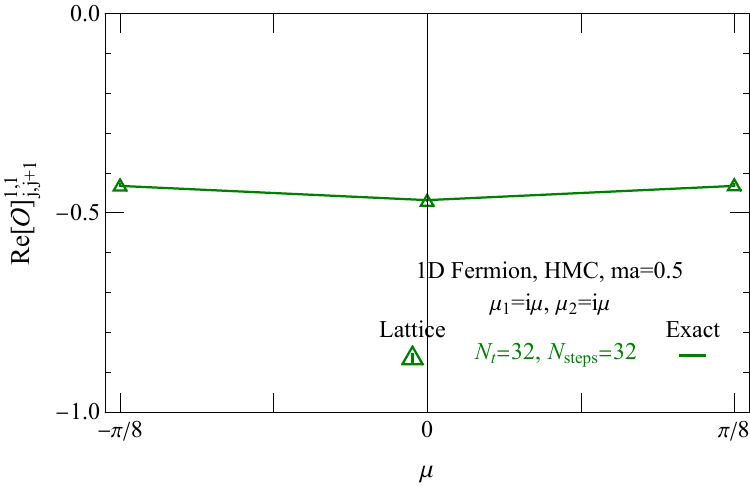}
\includegraphics[scale = 0.4]{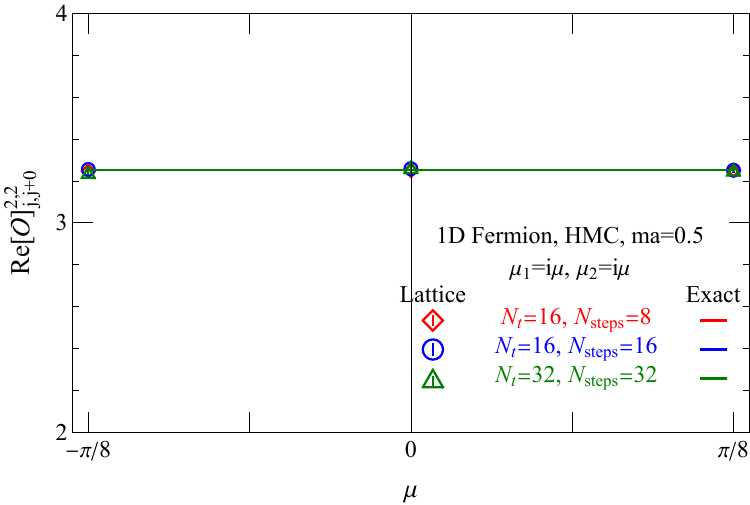}
\includegraphics[scale = 0.4]{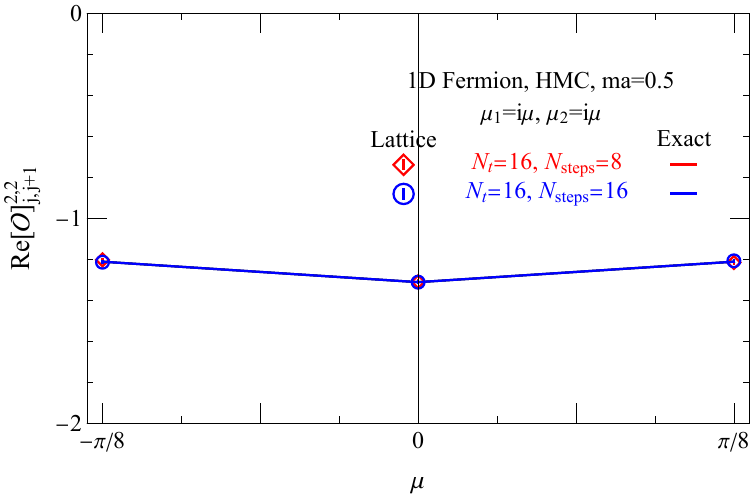}
\includegraphics[scale = 0.4]{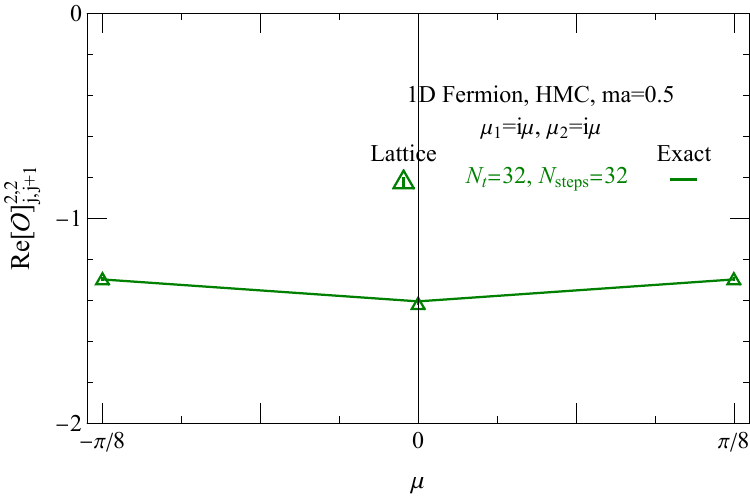}
\caption{
We use the Hybrid Monte Carlo (HMC) algorithm to compute $\mathrm{Re}(O^{\alpha_1,\alpha_2}_{j,j+n})$ for imaginary chemical potentials $(\mu_1, \mu_2)=(i\mu, i\mu)$ with $ma=0.5$, $(\alpha_1, \alpha_2)=(1, 1), (2, 2)$, and $n=0, 1$ for $N_t=16$ and $32$.
The simulation uses $2^{14}$ measurement sweeps, $2^6$ thermalization sweeps, and a measurement interval of $2^5$ sweeps.
The error bars are less than $1\%$.
Here, $N_{\mathrm{steps}}$ denotes the number of molecular-dynamics steps.
}
\label{pf2}
\end{figure}
\begin{figure}[tbp]
\centering
\includegraphics[scale = 0.6]{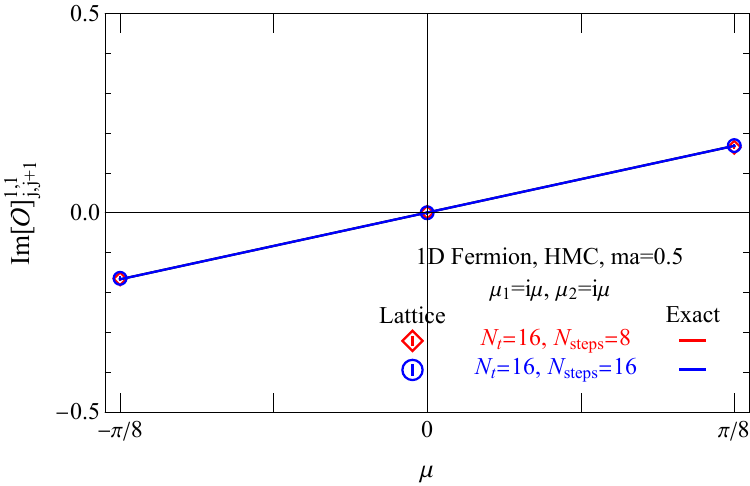}
\includegraphics[scale = 0.6]{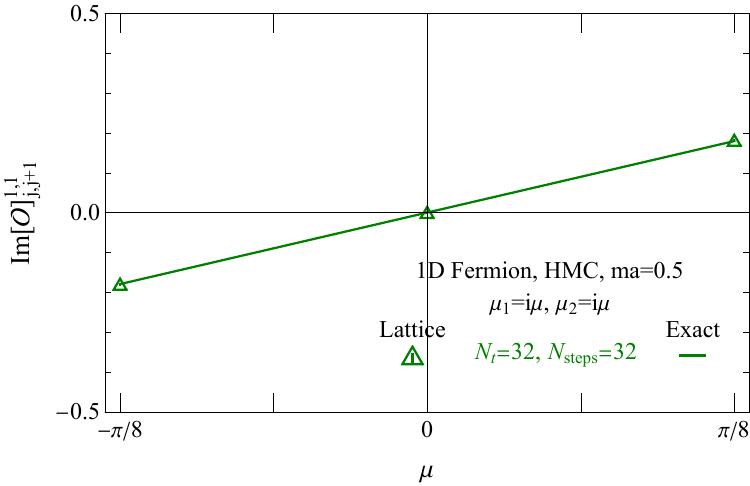}
\includegraphics[scale = 0.6]{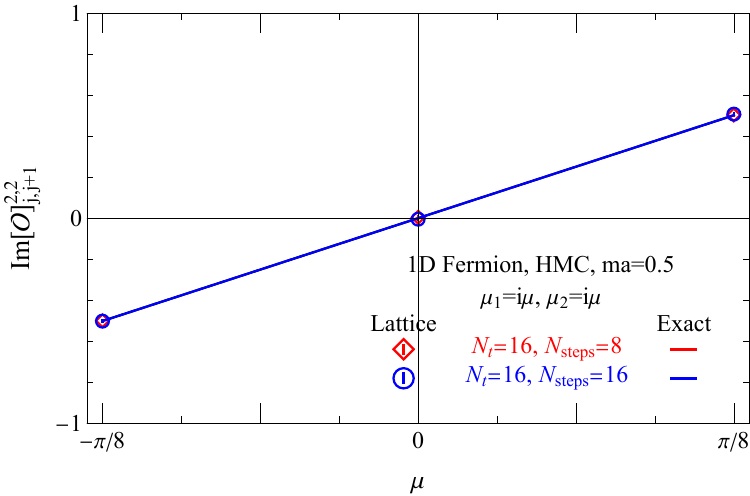}
\includegraphics[scale = 0.6]{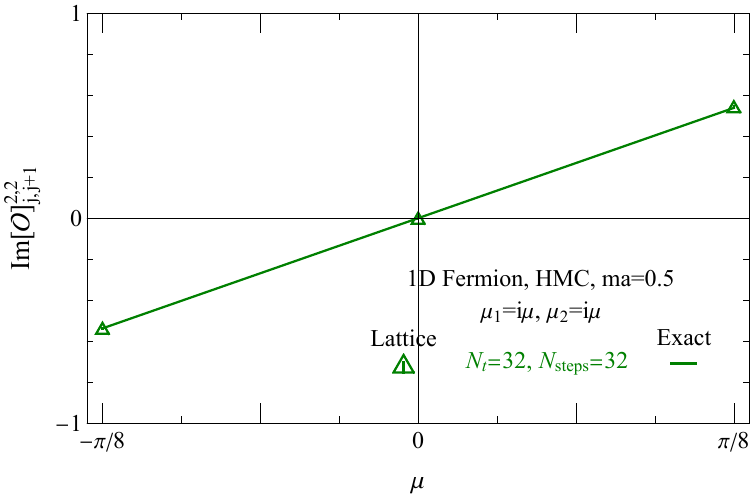}
\caption{
We use the Hybrid Monte Carlo (HMC) algorithm to compute $\mathrm{Im}(O^{\alpha_1,\alpha_2}_{j,j+1})$ for imaginary chemical potentials $(\mu_1, \mu_2)=(i\mu, i\mu)$ with $ma=0.5$ and $(\alpha_1, \alpha_2)=(1, 1), (2, 2)$ for $N_t=16$ and $32$.
The simulation uses $2^{14}$ measurement sweeps, $2^6$ thermalization sweeps, and a measurement interval of $2^5$ sweeps.
The error bars are less than $1\%$.
Here, $N_{\mathrm{steps}}$ denotes the number of molecular-dynamics steps.
}
\label{pf3}
\end{figure}

\section{AI-Assisted Analytic Continuation}
\label{sec:6}
\noindent
We use AI-assisted fitting to identify a suitable Laurent exponential model for data at imaginary chemical potential and to combine the fitted functions to real chemical potential.
We then construct two neural-network realizations: a physics-constrained neural network (PCNN) and a physics-neural hybrid model.
Because the PCNN already achieves near-machine-precision accuracy on the present data, the hybrid model yields essentially the same results and offers no further improvement.
Its usefulness should therefore be assessed on more complicated or deliberately incomplete lattice-model data.

\subsection{Laurent Exponential Model and Physics-Constrained Neural Network}
\noindent
To exploit the analytic structure of the lattice solution, we construct a physics-constrained neural network (PCNN) whose hidden features are prescribed by the analytic approximation rather than learned from data.
Unlike conventional feed-forward neural networks, where the hidden-layer representations are determined through optimization, the present architecture fixes the hidden features according to the exponential modes generated by the lattice operator.
Consequently, the network learns only the amplitudes associated with these physically motivated modes, thereby incorporating prior analytic knowledge directly into the model architecture (Fig. \ref{PCNN_architecture}).
For the present first-order approximation to the one-sided lattice derivatives, the analytic feature layer is chosen as
\bea
\phi(z)=[ e^{-z} \ 1 \ e^z],
\eea
which consists of the three dominant exponential modes predicted by the lattice analysis.
The corresponding trainable output layer is
\bea
f_{\mathrm{PCNN}}(z)=\phi(z)w=w_{-1}e^{-z}+w_0+w_1e^z, \ w\in\mathbb{R}^3,
\eea
where the trainable weights are
\bea
w=\begin{pmatrix}
w_{-1}
\\
w_0
\\
w_1
\end{pmatrix},
\eea
which is the vector of trainable parameters.
\begin{figure}[tbp]
\centering
\includegraphics[scale = 0.42]{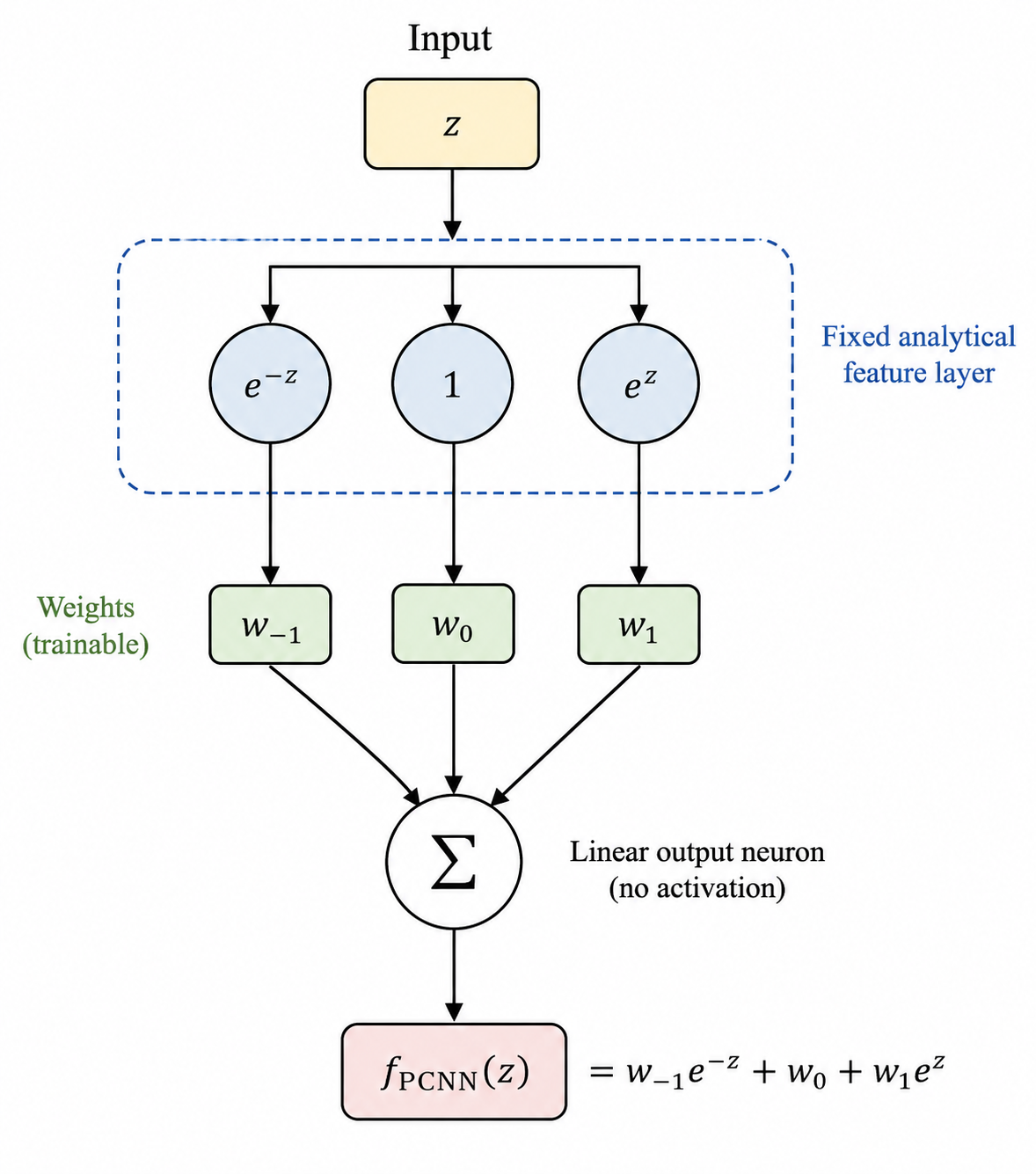}
\caption{Architecture of the proposed physics-constrained neural network (PCNN).
The input $z$ is mapped to a fixed analytic feature layer consisting of the exponential modes $\{\exp(-z), 1, \exp(z)\}$, which are linearly combined by the trainable weights $\{w_{-1}, w_0, w_1\}$ to produce the output $f_{\mathrm{PCNN}}(z)$.
}
\label{PCNN_architecture}
\end{figure}
\\

\noindent
The three feature neurons have clear physical interpretations.
The neuron $\exp(-z)$ represents the backward exponential mode, the constant neuron represents the zero-order mode, and the neuron $\exp(z)$ represents the forward exponential mode.
These three modes correspond to the dominant solutions suggested by the lattice operator.
Therefore, the hidden layer does not perform unconstrained feature extraction, but instead provides an analytically prescribed coordinate system for representing the lattice solution.
The trainable parameters determine the relative amplitudes of these physically meaningful modes.
\\

\noindent
Although the hidden features are fixed, the proposed model is still a neural network because it consists of a feature layer, trainable weights, a differentiable forward mapping, and weight optimization using standard neural-network algorithms.
The principal distinction from conventional neural networks is that the feature layer is designed analytically rather than learned from the training data.
Consequently, the present architecture can be regarded as a physics-constrained neural network with an analytically designed hidden representation.
\\

\noindent
The proposed architecture is closely related to the first-order Laurent exponential model,
\bea
f_{\mathrm{L}}(z)=a_{-1}e^{-z}+a_0+a_1 e^z.
\eea
Identifying
\bea
a_{-1}=w_{-1}, \
a_0=w_0, \
a_1=w_1,
\eea
shows that both formulations span the same hypothesis space,
\bea
{\cal H}=\mathrm{span}\{e^{-z}, 1, e^z\}.
\eea
Therefore, the PCNN does not introduce a new approximation space.
Instead, it reformulates the Laurent exponential expansion as a differentiable neural-network architecture.
When both models are trained on the same data using the same least-squares objective, they determine the same coefficients and therefore produce the same fitted function.
Their difference lies solely in the numerical implementation used to estimate these coefficients.
The Laurent exponential model typically determines the coefficients using a direct linear least-squares solver,
\bea
\min_a\sum_j\bigg(y_j-\sum_{n=-1}^1a_n e^{nz_j}\bigg)^2.
\eea
In contrast, the PCNN treats them as trainable weights optimized within a neural-network framework.
Instead of solving
\bea
\frac{\partial L}{\partial w}=0, \ L(w)=\frac{1}{2}\sum_j\big(f(z_j)-y_j\big)^2,
\eea
we use gradient descent
\bea
w^{(k+1)}=w^{(k)}-\eta\partial_w L.
\eea
If gradient descent converges to the least-squares minimum, the resulting coefficients satisfy $w_n^*=a_n$.
\\

\noindent
The proposed PCNN possesses several attractive properties.
First, the network is completely analytical because every hidden feature has a clear mathematical origin.
Second, only three trainable parameters are required for the present first-order approximation of the one-sided lattice derivatives, making the optimization problem extremely compact and highly interpretable.
Third, the restricted hypothesis space suppresses spurious high-frequency oscillations and reduces the tendency to fit noise, since the network is constrained to combinations of physically meaningful exponential modes.
These properties improve the interpretability and robustness of the resulting approximation.
\\

\noindent
The primary limitation of the proposed architecture is its dependence on prior physical knowledge.
Because the hidden feature layer is fixed, the network cannot represent functions outside the prescribed feature space.
For example, if the true solution contains additional modes such as
\bea
e^{-2z}, \ e^{2z},
\eea
or other independent basis functions, these contributions cannot be reproduced solely by adjusting the three output weights.
In this situation, the analytical feature layer must be expanded accordingly.
A natural generalization is
\bea
\phi_K(z)=[e^{-Kz} \ \cdots \ e^{-z} \ 1 \ e^z \ \cdots \ e^{Kz}],
\eea
which leads to
\bea
f_{\mathrm{PCNN}}^{(K)}(z)=\sum_{n=-K}^{K}w_ne^{nz}.
\eea
Increasing the truncation order $K$ systematically enlarges the hypothesis space while preserving the physical interpretation of every trainable parameter.
Therefore, the approximation can be imposed in a controlled manner without altering the overall neural network architecture.
\\

\noindent
For the present data, however, the first-order feature layer already spans the same function space as the $K=1$ Laurent exponential model.
Consequently, the proposed PCNN should be viewed not as a distinct approximation model but as a neural-network realization of the Laurent exponential expansion, providing a convenient and extensible framework for larger parameter spaces.
For the current three-parameter fit, direct linear least squares is simpler and more transparent; gradient-based optimization becomes more useful when the model is extended or embedded in a larger differentiable framework.

\subsection{Physics-Neural Hybrid Model}
\noindent
The principal objective of the proposed hybrid framework is not to replace the underlying physical model with a neural network, but to preserve the available physical knowledge while introducing sufficient flexibility to compensate for its incompleteness.
Many physical problems admit a natural decomposition of the target function into a theoretically understood component and an unknown correction
\bea
f(z)=f_{\mathrm{phys}}(z)+\delta f(z),
\eea
where $f_{\mathrm{phys}}(z)$ denotes the prediction obtained from physical principles and $\delta f(z)$ represents the remaining discrepancy.
Such discrepancies may arise from incomplete theoretical descriptions, truncated perturbative expansions, finite-volume corrections, discretization effects, statistical fluctuations, or experimental uncertainties.
\\

\noindent
Instead of relearning the entire target function from data, the proposed hybrid framework preserves the physically motivated backbone and employs the neural network only to approximate the residual,
\bea
\delta f(z)\approx \delta f_{\mathrm{NN}}(z).
\eea
Consequently, the dominant analytic structure remains explicitly determined by physics.
At the same time, the neural network is responsible only for the missing information.
This decomposition improves interpretability and reduces the optimization burden by preventing the neural network from reconstructing known physical structures.

\subsubsection{Residual Learning and Approximation Strategy}
\noindent
The hybrid prediction is defined by
\bea
f_{\mathrm{hyb}}(z)=f_{\mathrm{phys}}(z)+\delta f_{\mathrm{NN}}(z),
\eea
where
\bea
\delta f_{\mathrm{NN}}(z)=d+\sum_{h=1}^Ha_h\tanh\bigg(b_h\frac{z}{\pi}+c_h\bigg)
\eea
with $a_h, b_h, c_h, d\in \mathbb{R}$.
Residual learning offers several computational advantages.
Since the dominant physical behavior has already been incorporated into $f_{\mathrm{phys}}(z)$, the residual generally possesses a smaller magnitude and a smoother functional dependence than the original target function.
Consequently, fewer trainable parameters are required to achieve a comparable approximation accuracy, and the optimization landscape becomes substantially simpler than that of directly approximating the complete function.
Moreover, the magnitude of the learned residual provides a direct measure of the physical model's adequacy.
A negligible residual indicates that the available physical constraints are already sufficient to describe the data, whereas a significant residual signals the presence of missing physical effects or systematic discrepancies.
\\

\noindent
Since analytic continuation is fundamentally governed by complex analyticity, preserving the analytic structure of the approximation is more important than minimizing interpolation errors alone.
Unlike conventional complex-valued neural networks that independently approximate the real and imaginary parts,
\bea
f(z)=u(x, y)+iv(x, y),
\eea
the present framework directly constructs the residual from holomorphic basis functions.
Consequently, the Cauchy-Riemann equations are automatically satisfied throughout the fitting domain.
\\

\noindent
The complex hyperbolic tangent is chosen as the activation function because $\tanh(z)$ is meromorphic, with isolated poles located at
\bea
z=i\pi\bigg(n+\frac{1}{2}\bigg), \ n\in\mathbb{Z}.
\eea
The affine transformation $b_hz/\pi+c_h$ translates and rescales these poles while preserving their isolated nature.
Therefore, provided that the poles remain outside the fitting region, the residual network remains holomorphic throughout the domain of interest.
This property distinguishes the present framework from conventional neural networks, whose outputs generally do not satisfy the Cauchy-Riemann equations and therefore cannot be interpreted as analytic continuations.
\\

\noindent
The choice of the complex hyperbolic tangent is motivated by several considerations.
First, it preserves complex analyticity except at isolated poles.
Second, its nonlinear behavior can represent localized residual structures while avoiding the severe oscillations often associated with high-order polynomial approximations.
Third, its analytic derivative permits stable gradient-based optimization.
Finally, its meromorphic structure is mathematically well understood, making it suitable for applications involving analytic continuation.
Although other holomorphic activation functions could also be employed, the complex hyperbolic tangent provides an effective compromise between expressive power, computational efficiency, and numerical stability.

\subsubsection{Optimization and Regularization}
\noindent
The trainable parameters
\bea
\theta=\{a_h, b_h, c_h, d\}
\eea
are obtained by minimizing
\bea
{\cal L}=\frac{1}{N}\sum_{j=1}^N|f_{\mathrm{hyb}}(z_j)-y_j|^2
+\lambda\bigg(d^2+\sum_{h=1}^H(a_h^2+b_h^2+c_h^2)\bigg).
\eea
The coefficient $\lambda$ is treated as a tunable hyperparameter rather than a universal constant.
Setting
\bea
\lambda=0
\eea
corresponds to training without L2 regularization.
Increasing $\lambda$ progressively suppresses large parameter values and therefore reduces the complexity of the learned residual.
The optimal value of $\lambda$ depends on the expected noise level, the complexity of the residual function, and the desired trade-off between approximation accuracy and generalization.
In practical applications, $\lambda$ can be selected using validation data, cross-validation, Bayesian optimization, or prior physical knowledge.
\\

\noindent
In addition to reducing overfitting, L2 regularization improves the stability of analytic continuation.
Large network parameters can generate rapidly varying analytic functions or move the poles of the complex activation closer to the fitting domain.
Penalizing large coefficients therefore discourages unnecessarily complicated analytic structures and promotes smoother extrapolation.

\subsubsection{Pole Structure and Numerical Stability}
\noindent
The locations of the poles of the residual network depend on the trainable parameters through the affine transformation $b_hz/\pi+c_h$.
In the present implementation, no explicit constraints are imposed on the pole locations.
Instead, numerical stability is improved by choosing an appropriate fitting domain and applying L2 regularization, which suppresses excessively large parameter values and reduces the tendency for poles to approach the sampled region.
No numerical instability associated with nearby poles is observed in the benchmark calculations.
Future implementations should incorporate explicit pole constraints or regularization terms that directly penalize poles entering the fitting domain.
\\

\noindent
The neural correction is intended to capture systematic discrepancies rather than numerical roundoff errors.
Therefore, the hybrid framework employs an adaptive gate based on the prediction accuracy of the physics backbone.
Let $E_{\mathrm{phys}}$ denote the normalized root-mean-square error of the physics backbone.
The hybrid prediction is defined as
\bea
f_{\mathrm{hyb}}(z)=\Bigg\{\begin{array}{ll}
f_{\mathrm{phys}}(z), & E_{\mathrm{phys}}< \epsilon, \\
f_{\mathrm{phys}}(z)+\delta f_{\mathrm{NN}}(z), & E_{\mathrm{phys}}\ge \epsilon.
\end{array}.
\eea
The threshold
\bea
\epsilon=10^{-10}
\eea
used in this paper is regarded as an algorithmic tolerance rather than a universal constant.
Different applications require different tolerances depending on the numerical precision of the data and the desired prediction accuracy.
Consequently, the gate implements adaptive model selection.
Whenever the physical model already provides a sufficiently accurate description, introducing additional neural degrees of freedom would primarily fit floating-point roundoff errors rather than meaningful physical information.
The neural correction is therefore activated only when residual structures remain above the chosen numerical evidence.

\subsubsection{Theoretical Guarantees}
\noindent
The construction of the hybrid model immediately implies several mathematical properties.
\begin{proposition} [\textbf{Holomorphicity}]
Let $D\in\mathbb{C}$ be an open domain.
Suppose that $f_{\mathrm{phys}}$ is holomorphic on $D$ and that every pole of $\delta f_{\mathrm{NN}}$ lies outside $D$.
We then get that
\bea
f_{\mathrm{hyb}}(z)=f_{\mathrm{phys}}(z)+\delta f_{\mathrm{NN}}(z)
\eea
is holomorphic on $D$.
\end{proposition}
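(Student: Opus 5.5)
The plan is to reduce everything to two standard facts: holomorphic functions on an open set form a ring (sums, products and compositions of holomorphic maps are holomorphic), and a meromorphic function is holomorphic on any open set that contains none of its poles. Since $f_{\mathrm{phys}}$ is holomorphic on $D$ by hypothesis, it suffices to show that $\delta f_{\mathrm{NN}}$ is holomorphic on $D$ and then add the two.

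First I will write out the residual network explicitly as
\bea
\delta f_{\mathrm{NN}}(z)=d+\sum_{h=1}^H a_h\tanh\big(g_h(z)\big),\quad g_h(z)\equiv b_h\frac{z}{\pi}+c_h .
\eea
Each $g_h$ is affine, hence entire. The function $\tanh w=\sinh w/\cosh w$ is a quotient of entire functions. It is therefore meromorphic, and its only singularities are simple poles at the zeros of $\cosh w$, namely $w=i\pi(n+\tfrac12)$ as stated earlier. So $\tanh\circ g_h$ is holomorphic at every $z$ with $g_h(z)\notin i\pi(\mathbb{Z}+\tfrac12)$. Its poles are exactly the points $z$ where $\cosh(g_h(z))=0$ and $a_h\neq0$.

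Second, I will make the hypothesis "every pole of $\delta f_{\mathrm{NN}}$ lies outside $D$" precise. This is the step I expect to need some care, because the poles of the individual summands could in principle cancel in the sum (for example, two terms with coincident poles and opposite residues). I will handle this in one of two ways.

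The first option is to interpret the hypothesis as concerning the poles of each term $a_h\tanh(g_h)$. Then every summand is holomorphic on $D$, and the finite sum plus the constant $d$ is holomorphic on $D$.

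The second option is to take the hypothesis literally, as concerning the poles of the sum. The sum is meromorphic on $\mathbb{C}$, and any apparent singularity in $D$ that is not a pole must be removable, since it is a point where the Laurent expansions of the summands have cancelling principal parts. Redefining the function by continuity there gives a function holomorphic on $D$.

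In either reading, $\delta f_{\mathrm{NN}}$ is holomorphic on $D$. The degenerate case $b_h=0$ needs a separate remark: there $\tanh(c_h)$ is a constant, which is finite by hypothesis, or else the corresponding term is excluded.

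Finally, $f_{\mathrm{hyb}}=f_{\mathrm{phys}}+\delta f_{\mathrm{NN}}$ is a sum of two functions holomorphic on $D$, so it is holomorphic on $D$. Equivalently, its real and imaginary parts satisfy the Cauchy–Riemann equations there, which is the property the surrounding discussion relies on. I will also note in passing that the statement's "$D\in\mathbb{C}$" should read $D\subset\mathbb{C}$.
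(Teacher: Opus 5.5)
Your proposal is correct and follows the same route as the paper. The paper's proof notes that $f_{\mathrm{phys}}$ and $\delta f_{\mathrm{NN}}$ are both holomorphic on $D$ and concludes by closure of holomorphic functions under addition. Your extra steps only make explicit what that one-line argument takes for granted: the meromorphicity of $\tanh\circ g_h$, the possible cancellation of poles between summands, and the degenerate case $b_h=0$ (which is in fact automatic, since $c_h\in\mathbb{R}$ gives $\cosh c_h\ge 1$).
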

\begin{proof}
Since both $f_{\mathrm{phys}}$ and $\delta f_{\mathrm{NN}}$ are holomorphic on $D$, their sum is holomorphic by the closure of holomorphic functions under addition.
\end{proof}
\begin{proposition} [\textbf{Consistency}]
If the physical model exactly reproduces the target function,
\bea
f_{\mathrm{target}}(z)=f_{\mathrm{phys}}(z),
\eea
the optimal residual satisfies
\bea
\delta f_{\mathrm{NN}}(z)=0,
\eea
and the hybrid model reduces to the physics backbone.
\end{proposition}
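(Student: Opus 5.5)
The plan is to argue directly from the loss ${\cal L}$ and from the adaptive gate defined just before the statement. There are two routes, depending on whether the gate is active.

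First, the gated route. If $f_{\mathrm{target}}=f_{\mathrm{phys}}$, then the data satisfy $y_j=f_{\mathrm{phys}}(z_j)$ exactly, up to the numerical noise present in the data. Hence the normalized root-mean-square error of the physics backbone is $E_{\mathrm{phys}}=0<\epsilon$. By the definition of $f_{\mathrm{hyb}}$, the first branch is selected, so $f_{\mathrm{hyb}}=f_{\mathrm{phys}}$, and the residual is effectively $\delta f_{\mathrm{NN}}\equiv 0$. This already gives the statement in the form it is implemented in the paper.

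Second, for robustness I would also show that the ungated optimization yields a zero residual. Substituting $y_j=f_{\mathrm{phys}}(z_j)$ into ${\cal L}$ gives
\bea
{\cal L}(\theta)=\frac{1}{N}\sum_{j=1}^N|\delta f_{\mathrm{NN}}(z_j)|^2+\lambda\bigg(d^2+\sum_{h=1}^H(a_h^2+b_h^2+c_h^2)\bigg)\ge 0 .
\eea
The choice $\theta=0$ gives $\delta f_{\mathrm{NN}}\equiv d+\sum_h a_h\tanh(c_h)=0$ and ${\cal L}=0$, so $\theta=0$ is a global minimizer.

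I would then split into two cases according to $\lambda$:
\begin{itemize}
\item For $\lambda>0$, any minimizer must have ${\cal L}=0$. This forces $d=a_h=b_h=c_h=0$, so the minimizer is unique and $\delta f_{\mathrm{NN}}\equiv0$ identically on $\mathbb{C}$, away from nothing, since $\tanh(0)=0$.
\item For $\lambda=0$, a minimizer only needs $\delta f_{\mathrm{NN}}(z_j)=0$ at the sample points. The minimizer need not be unique: for example, $a_h=0$ with arbitrary $b_h,c_h$ gives the same residual. The optimal residual as a \emph{function} can be taken to be zero, and every minimizer vanishes on the data. This is the sense in which the statement holds there.
\end{itemize}

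The main obstacle is exactly this $\lambda=0$ non-uniqueness. A nonzero combination of $\tanh$ units could vanish at finitely many $z_j$ without vanishing identically. I would handle it by stating the claim as "the minimum-norm (or $\lambda\to0^+$ limit) optimal residual is zero." This follows from the $\lambda>0$ case, because the minimizer is $\theta=0$ for every $\lambda>0$.

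Finally, I would note that when $\delta f_{\mathrm{NN}}=0$, the hybrid model reduces to $f_{\mathrm{phys}}$ in both routes.
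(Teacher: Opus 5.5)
Your proposal is correct. The first half of your second route is exactly the paper's proof: with $y_j=f_{\mathrm{phys}}(z_j)$ the loss is non-negative and $\theta=0$ makes it vanish, so the zero residual is a global minimizer. The paper stops there. Strictly, that shows only that the zero residual is \emph{a} minimizer, not that the optimal residual \emph{is} zero, so your case split is what actually delivers the statement. For $\lambda>0$ you obtain uniqueness of the minimizer. For $\lambda=0$ you rightly insist on a selection rule, because the non-uniqueness is genuine. Since $\tanh$ has period $i\pi$, take samples $z_j=i\mu_j$ with $\mu_j\in\Delta\mathbb{Z}$, and choose $H=1$, $b_1=\pi^2/\Delta$, $c_1\neq0$, $a_1\neq0$, $d=-a_1\tanh(c_1)$. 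This gives $\delta f_{\mathrm{NN}}(z_j)=0$ for every $j$, with all poles on the line $\mathrm{Re}\,z=-\pi c_1/b_1$ away from the data, yet $\delta f_{\mathrm{NN}}\not\equiv0$. Your minimum-norm or $\lambda\to0^+$ formulation is the right repair.

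The gated route is an extra, implementation-level argument not in the paper's proof. It shows that the hybrid output reduces to $f_{\mathrm{phys}}$, but it says nothing about the optimizer. There is one small inconsistency in it. You say the data match $f_{\mathrm{phys}}$ ``exactly, up to numerical noise'' and then conclude $E_{\mathrm{phys}}=0$. Under the stated hypothesis, drop the hedge. With noise present you would need it below $\epsilon$ for the gate to select the first branch, and $\theta=0$ would no longer give ${\cal L}=0$ in your second route.
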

\begin{proof}
The residual is identically zero,
\bea
f_{\mathrm{target}}-f_{\mathrm{phys}}=0.
\eea
Since the loss function is non-negative, its global minimum is attained by the zero residual.
\end{proof}
\noindent
These propositions clarify that the hybrid framework is consistent with the underlying physical model and preserves analyticity by construction.

\subsubsection{Computational Complexity}
\noindent
Because the neural network learns only the residual, the number of hidden neurons is typically much smaller than that required to approximate the complete target function.
Consequently, the computational overhead introduced by the hybrid framework is generally modest.
The dominant physical behavior is already captured by the PCNN, allowing the optimization to focus on comparatively small corrections.
\\

\noindent
The present framework assumes that the unknown correction is sufficiently smooth to be represented by a finite holomorphic neural network.
More complicated singularities, such as branch cuts or essential singularities not incorporated into the physics backbone, would require alternative architectures or explicit singular-function decompositions.
Furthermore, all benchmark problems considered in this paper are already reproduced by the PCNN with a normalized root-mean-square error of approximately $10^{-13}$, so the adaptive gate remains inactive.
Consequently, the present study primarily validates the design of the hybrid framework rather than demonstrating the practical improvement provided by the neural residual.
A more comprehensive evaluation should therefore consider noisy datasets, deliberately incomplete physical models, or experimental measurements, in which the residual network is expected to become active, and its contribution can be quantitatively assessed.

\subsection{From Imaginary Chemical Potential to Real Chemical Potential}
\noindent
We demonstrate AI-assisted analytic continuation of the two-point pseudo-fermion correlators ${\cal O}_{j, j+1}^{\alpha\beta}$ from imaginary chemical potential to real chemical potential.
Specifically, we continue $O^{1, 1}_{j,j+1}$ (in Figs. \ref{fit1}, \ref{fit2}) and $O^{2, 2}_{j,j+1}$ (in Figs. \ref{fit3}, \ref{fit4}).
The Laurent exponential model and the PCNN reproduce the data accurately with only a small number of training points, for both lattice sizes.
For $(\alpha, \beta)=(1, 1)$, the fitted functions on the imaginary axis are $-0.4375\times\exp(-i\mu)$ for $N_t=16$ and $-0.46875\times\exp(-i\mu)$ for $N_t=32$.
For $(\alpha, \beta)=(2, 2)$, they are $-1.3125\times\exp(-i\mu)$ for $N_t=16$ and $-1.40625\times\exp(-i\mu)$ for $N_t=32$.
After training, we perform analytic continuation by replacing $\exp(-i\mu)$ with $\exp(-\mu)$.
The continued functions agree well with the exact real-axis results.
These results demonstrate the usefulness of the AI-assisted fitting framework for this exactly solvable benchmark.
\begin{figure}[tbp]
\centering
\includegraphics[scale = 0.62]{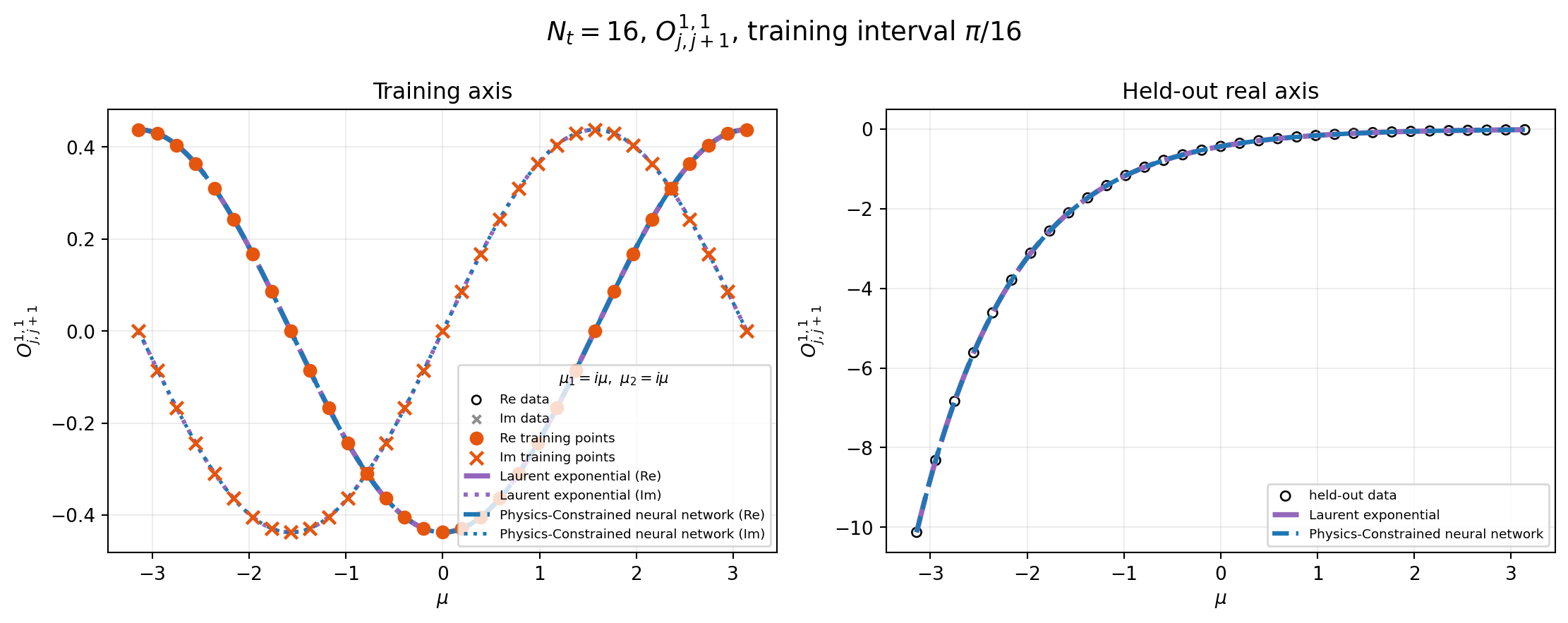}
\includegraphics[scale = 0.62]{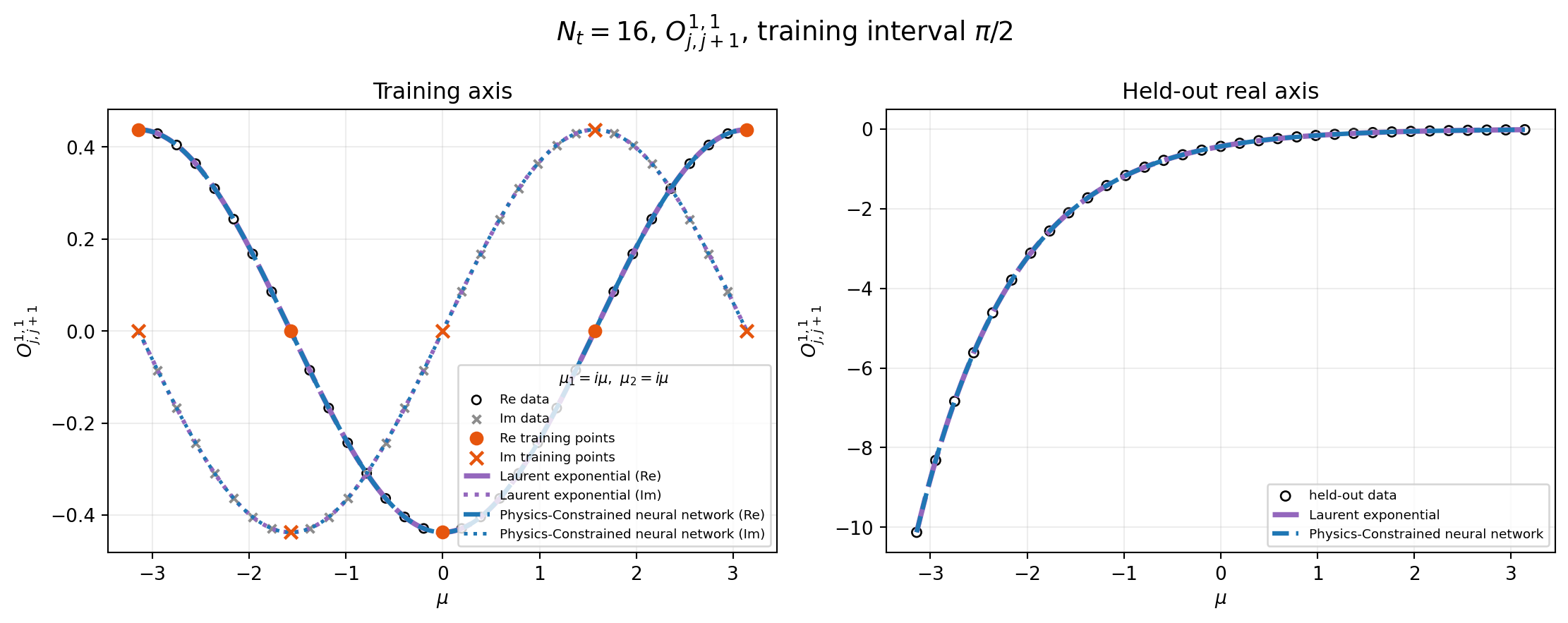}
\caption{
Analytic continuation from imaginary chemical potential (left panels) to real chemical potential (right panels) for $O^{1, 1}_{j,j+1}$ using the Laurent exponential model and the physics-constrained neural network (PCNN) at $N_t=16$.
The fits obtained with training intervals $\pi/16$ and $\pi/2$ are nearly identical.
}
\label{fit1}
\end{figure}
\begin{figure}[tbp]
\centering
\includegraphics[scale = 0.62]{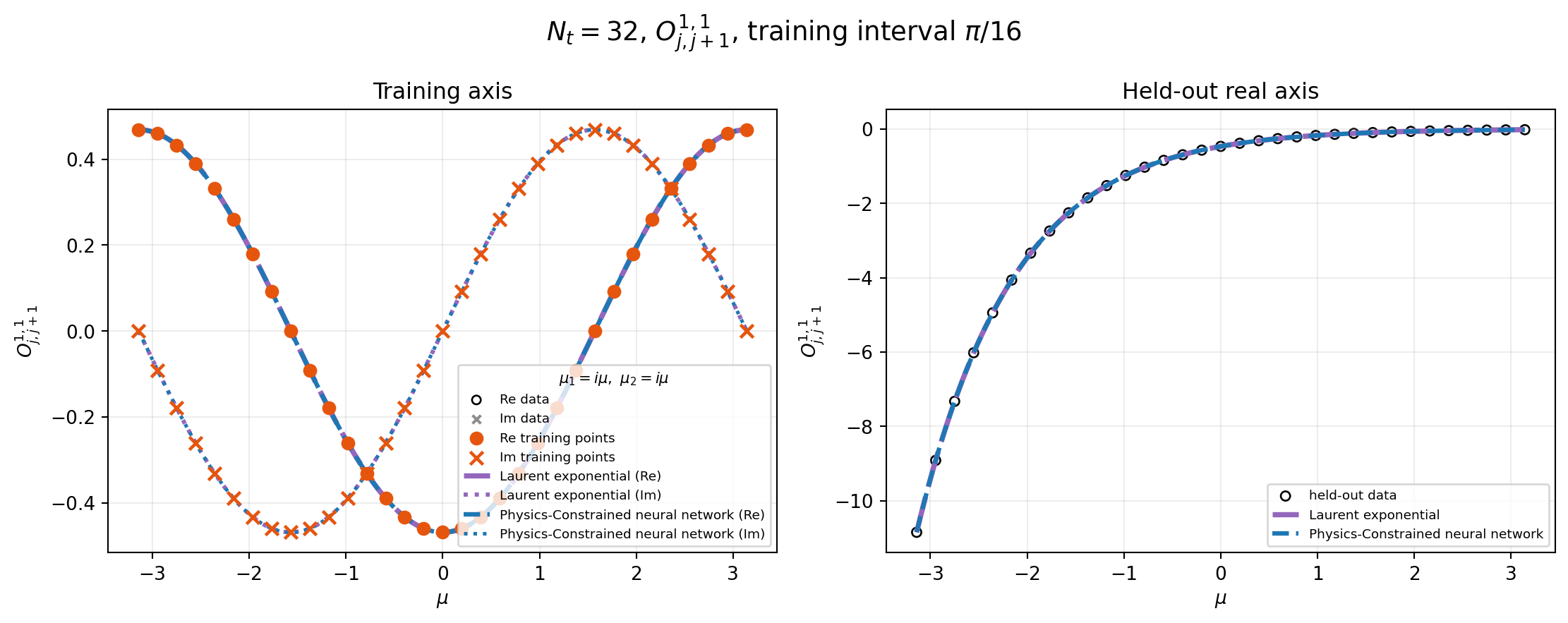}
\includegraphics[scale = 0.62]{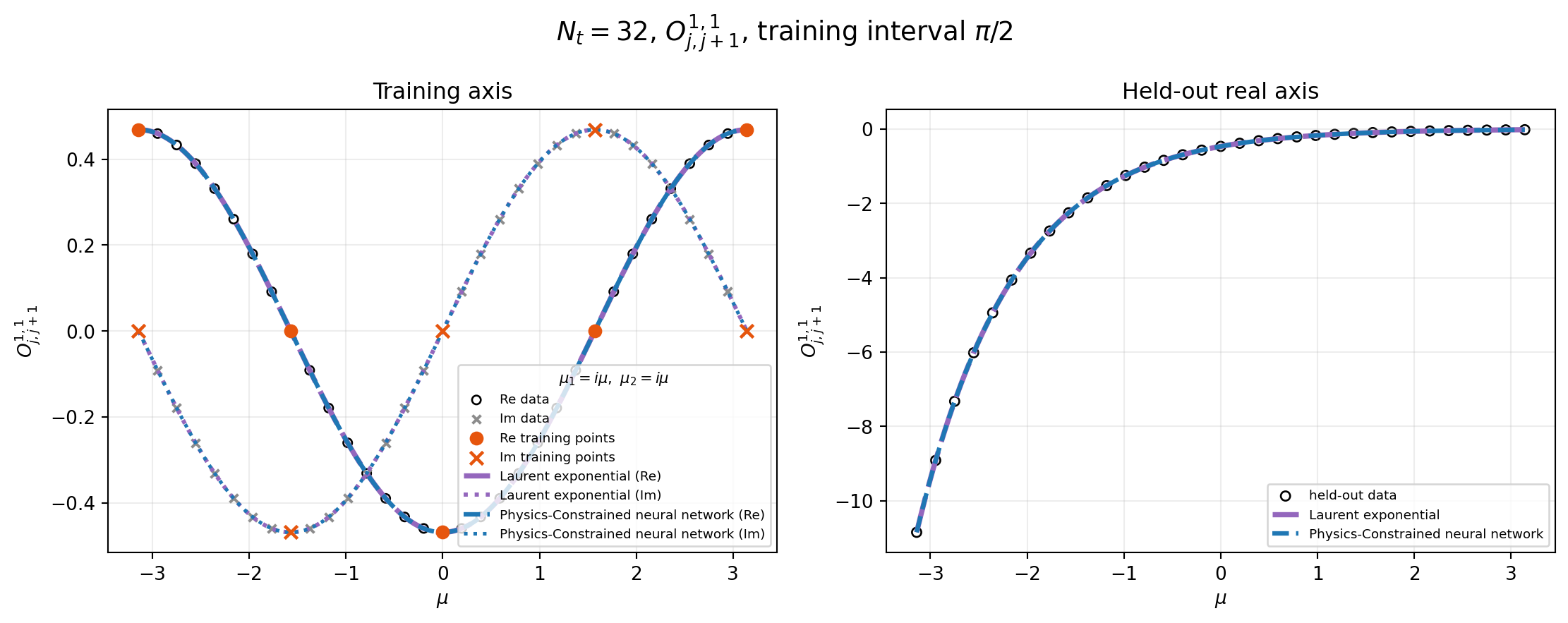}
\caption{
Analytic continuation from imaginary chemical potential (left panels) to real chemical potential (right panels) for $O^{1, 1}_{j,j+1}$ using the Laurent exponential model and the physics-constrained neural network (PCNN) at $N_t=32$.
The fits obtained with training intervals $\pi/16$ and $\pi/2$ are nearly identical.
}
\label{fit2}
\end{figure}
\begin{figure}[tbp]
\centering
\includegraphics[scale = 0.62]{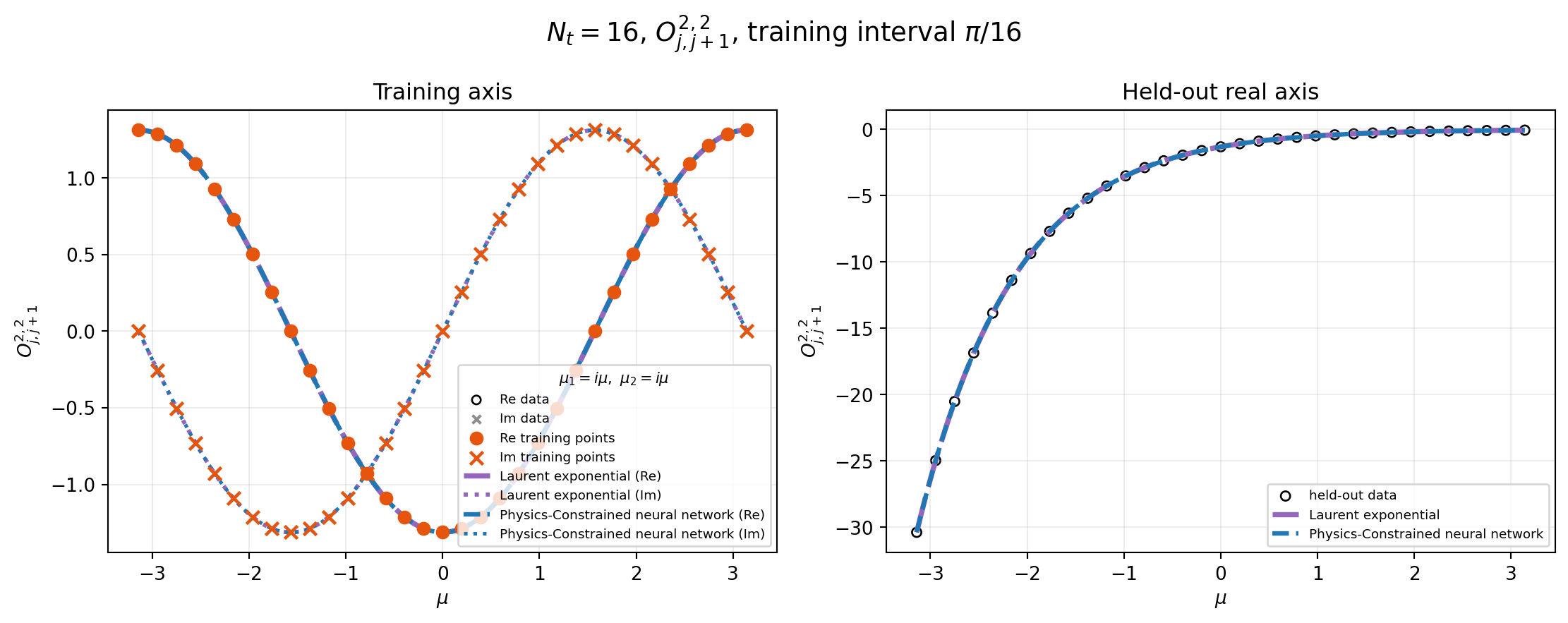}
\includegraphics[scale = 0.62]{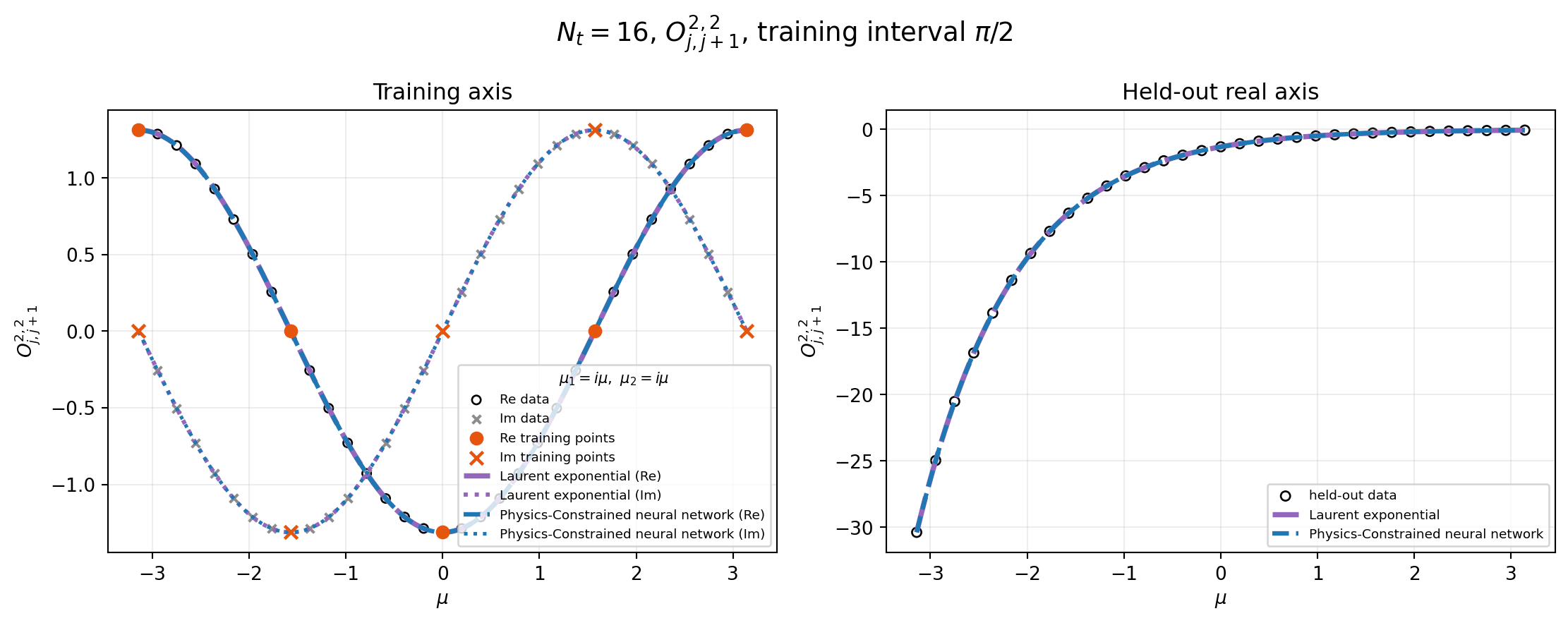}
\caption{
Analytic continuation from imaginary chemical potential (left panels) to real chemical potential (right panels) for $O^{2, 2}_{j,j+1}$ using the Laurent exponential model and the physics-constrained neural network (PCNN) at $N_t=16$.
The fits obtained with training intervals $\pi/16$ and $\pi/2$ are nearly identical.
}
\label{fit3}
\end{figure}
\begin{figure}[tbp]
\centering
\includegraphics[scale = 0.62]{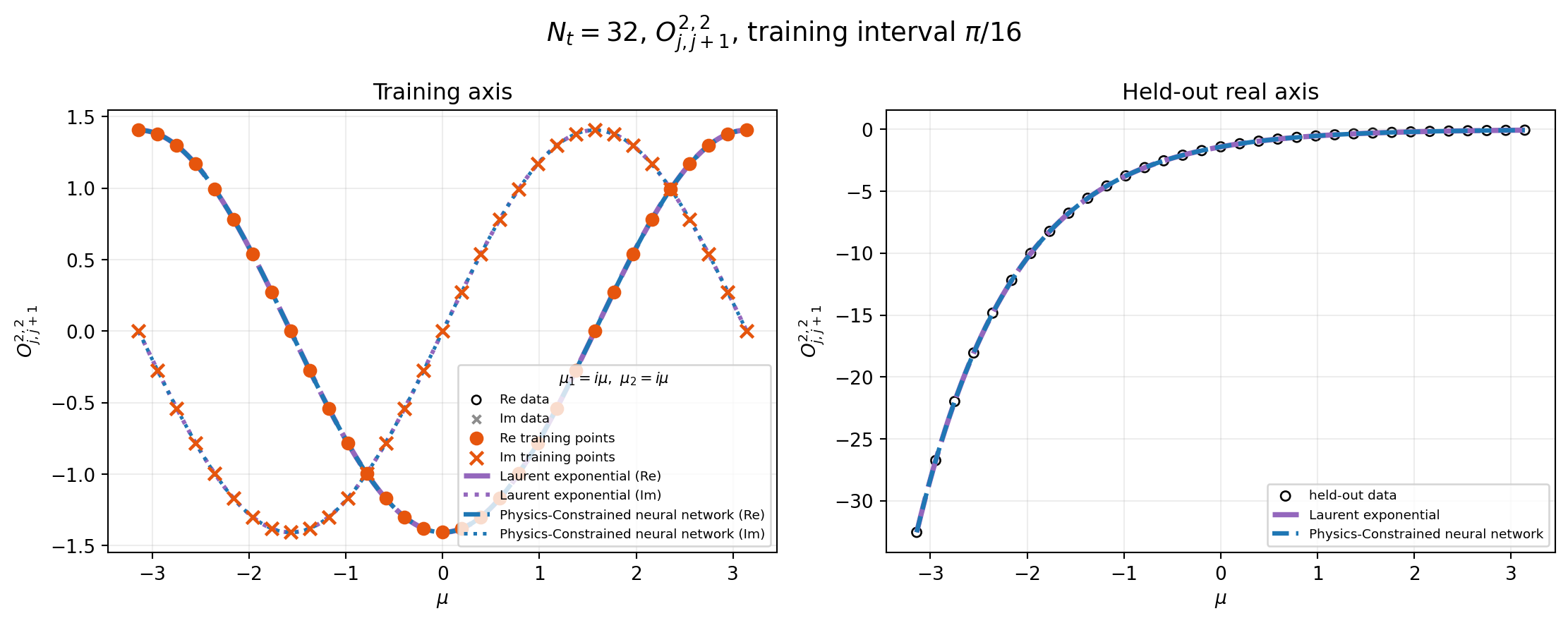}
\includegraphics[scale = 0.62]{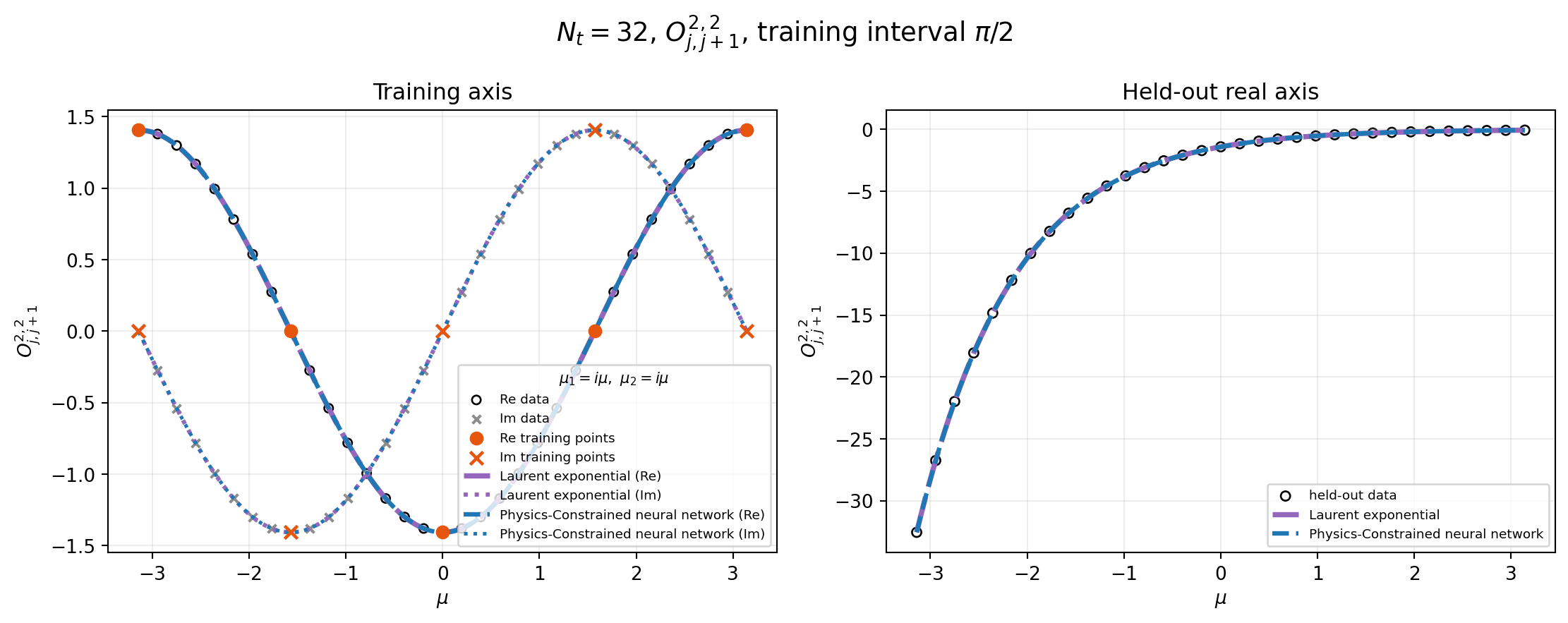}
\caption{
Analytic continuation from imaginary chemical potential (left panels) to real chemical potential (right panels) for $O^{2, 2}_{j,j+1}$ using the Laurent exponential model and the physics-constrained neural network (PCNN) at $N_t=32$.
The fits obtained with training intervals $\pi/16$ and $\pi/2$ are nearly identical.
}
\label{fit4}
\end{figure}

\newpage
\section{Discussion and Conclusion}
\label{sec:7}
\noindent
In this work, we investigated analytic continuation with respect to the chemical potential in lattice field theory.
Analytic continuation from imaginary to real chemical potential is widely used to explore finite-density systems because direct lattice simulations generally suffer from the sign problem \cite{Lombardo:1999cz}.
However, its validity relies on the analyticity of the underlying observables, which is difficult to establish in general.
To address this issue, we studied one-dimensional free Dirac fermions, for which exact solutions are available.
We considered both the naive and exponential implementations of the chemical potential.
We showed that, on a finite lattice, analytic continuation is valid.
The infinite-volume limit can destroy analyticity, implying that the limit and analytic continuation do not generally commute. In addition, the exponential implementation preserves the correct continuum limit without imposing further constraints on the chemical potential, providing theoretical support for its use in lattice simulations.
\\

\noindent
Although the lattice formulation is non-Hermitian \cite{Stamatescu:1993ga}, we showed that degenerate fermions with paired forward and backward lattice derivatives admit a non-negative formulation for the chemical-potential assignments $(\mu_1,\mu_2)=(\mu,-\mu)$ and $(i\mu, i\mu)$, enabling Hybrid Monte Carlo simulations.
The numerical results for the two-point pseudofermion correlation functions agree well with the exact solutions for $N_t=16$ and $32$.
Furthermore, by analyzing the eigenvalue spectrum of the Dirac operator, we showed that, in the cases studied here, the sign problem is a technical obstacle to introducing pseudofermions rather than an indication of an ill-defined fermionic theory.
For two degenerate flavors, the occurrence of a sign problem coincides with the appearance of Dirac eigenvalues with negative real parts in the examined configurations.
We further demonstrated that AI-assisted analytic continuation based on the Laurent exponential model and a physics-constrained neural network accurately reconstructs the two-point correlation functions at real chemical potential from data obtained at imaginary chemical potential.
Although the optimal fitting ansatz depends on the observable, the proposed framework provides a practical strategy for this benchmark and a potentially useful starting point for more general finite-density lattice systems.
\\

\noindent
The exact solution of the one-dimensional model reveals that the relevant finite-lattice observables remain analytic over a broader domain than their infinite-volume limit.
Consequently, analytic continuation over the finite-lattice domain should be performed before taking the limit, with the order of limits of limits stated explicitly.
This observation provides a theoretical foundation for analytic-continuation techniques widely used in finite-density lattice field theory and clarifies their range of validity.
\\

\noindent
An important future direction is to extend the present framework beyond free fermions.
In particular, it will be interesting to investigate interacting theories and determine whether the eigenvalue criterion identified here continues to characterize the sign problem.
Since interactions generally produce complex Dirac spectra, understanding the relation between the eigenvalue and the sign problem may provide a useful classification of finite-density lattice theories.
Ultimately, our goal is to apply these ideas to finite-density QCD \cite{Lombardo:1999cz}.
As an intermediate step, we plan to study the Gross--Neveu--Yukawa model and then generalize the present HMC algorithm to non-degenerate fermion masses, for example by introducing the twisted-mass formulation and first testing it in one-dimensional free Dirac fermions.

\section*{Acknowledgments}
\noindent 
We thank Sinya Aoki for his helpful discussion. 
%HZ acknowledges the Guangdong Major Project of Basic and Applied Basic Research
%(Grant No. 2020B0301030008) and the National Natural Science Foundation of China
%(Grant No. 12105107). 
CTM thanks Nan-Peng Ma for his encouragement. 

%\noindent
%The author acknowledges the YST Program of the APCTP; 
%Post-Doctoral International Exchange Program (Grant No. YJ20180087); 
%China Postdoctoral Science Foundation, Postdoctoral General Funding: Second Class (Grant No. 2019M652926); 
%Foreign Young Talents Program (Grant No. QN20200230017); 
%Science and Technology Program of Guangzhou (Grant No. 2019050001).

%\appendix
%\section{Thermalization}
%\label{sec:A}

%\section{Auto-Correlation Time}
%\label{sec:B}


\baselineskip 22pt
\begin{thebibliography}{99}
%\cite{Feynman:1948ur}
\bibitem{Feynman:1948ur}
R.~P.~Feynman,
``Space-time approach to nonrelativistic quantum mechanics,''
Rev. Mod. Phys. \textbf{20}, 367-387 (1948)
doi:10.1103/RevModPhys.20.367
%1898 citations counted in INSPIRE as of 27 Jul 2026

%\cite{Wilson:1974sk}
\bibitem{Wilson:1974sk}
K.~G.~Wilson,
``Confinement of Quarks,''
Phys. Rev. D \textbf{10}, 2445-2459 (1974)
doi:10.1103/PhysRevD.10.2445
%7152 citations counted in INSPIRE as of 27 Jul 2026

%\cite{Makeenko:2002uj}
\bibitem{Makeenko:2002uj}
Y.~Makeenko,
``Methods of contemporary gauge theory,''
Cambridge University Press, 2005,
ISBN 978-0-521-02215-6, 978-0-521-80911-5, 978-0-511-05768-7
doi:10.1017/CBO9780511535147
%25 citations counted in INSPIRE as of 27 Jul 2026

%\cite{Gattringer:2010zz}
\bibitem{Gattringer:2010zz}
C.~Gattringer and C.~B.~Lang,
``Quantum chromodynamics on the lattice,''
Lect. Notes Phys. \textbf{788}, 1-343 (2010)
Springer, 2010,
ISBN 978-3-642-01849-7, 978-3-642-01850-3
doi:10.1007/978-3-642-01850-3
%476 citations counted in INSPIRE as of 27 Jul 2026

%\cite{Ma:2024zbl}
\bibitem{Ma:2024zbl}
C.~T.~Ma and H.~Zhang,
``Lattice chiral fermion without Hermiticity,''
Int. J. Mod. Phys. A \textbf{40}, no.23, 2530008 (2025)
doi:10.1142/S0217751X2530008X
[arXiv:2411.09886 [hep-lat]].
%2 citations counted in INSPIRE as of 27 Jul 2026

%\cite{Nielsen:1980rz}
\bibitem{Nielsen:1980rz}
H.~B.~Nielsen and M.~Ninomiya,
``Absence of Neutrinos on a Lattice. 1. Proof by Homotopy Theory,''
Nucl. Phys. B \textbf{185}, 20 (1981)
[erratum: Nucl. Phys. B \textbf{195}, 541 (1982)]
doi:10.1016/0550-3213(82)90011-6
%1665 citations counted in INSPIRE as of 27 Jul 2026

%\cite{Nielsen:1981xu}
\bibitem{Nielsen:1981xu}
H.~B.~Nielsen and M.~Ninomiya,
``Absence of Neutrinos on a Lattice. 2. Intuitive Topological Proof,''
Nucl. Phys. B \textbf{193}, 173-194 (1981)
doi:10.1016/0550-3213(81)90524-1
%1112 citations counted in INSPIRE as of 27 Jul 2026

%\cite{Karsten:1981gd}
\bibitem{Karsten:1981gd}
L.~H.~Karsten,
``Lattice Fermions in Euclidean Space-time,''
Phys. Lett. B \textbf{104}, 315-319 (1981)
doi:10.1016/0370-2693(81)90133-7
%205 citations counted in INSPIRE as of 27 Jul 2026

%\cite{Wick:1954eu}
\bibitem{Wick:1954eu}
G.~C.~Wick,
``Properties of Bethe-Salpeter Wave Functions,''
Phys. Rev. \textbf{96}, 1124-1134 (1954)
doi:10.1103/PhysRev.96.1124
%675 citations counted in INSPIRE as of 27 Jul 2026

%\cite{Kosyakov:2026nrs}
\bibitem{Kosyakov:2026nrs}
B.~P.~Kosyakov, E.~Y.~Popov and M.~A.~Vronski{\u{i}},
``Can Euclidean lattice quantum field theory be analytically continued into Minkowski space?,''
Mod. Phys. Lett. A \textbf{41}, no.26, 2650146 (2026)
doi:10.1142/s0217732326501464
[arXiv:2605.18787 [hep-th]].
%0 citations counted in INSPIRE as of 27 Jul 2026

%\cite{Kogut:1974ag}
\bibitem{Kogut:1974ag}
J.~B.~Kogut and L.~Susskind,
``Hamiltonian Formulation of Wilson's Lattice Gauge Theories,''
Phys. Rev. D \textbf{11}, 395-408 (1975)
doi:10.1103/PhysRevD.11.395
%2782 citations counted in INSPIRE as of 27 Jul 2026

%\cite{Ginsparg:1981bj}
\bibitem{Ginsparg:1981bj}
P.~H.~Ginsparg and K.~G.~Wilson,
``A Remnant of Chiral Symmetry on the Lattice,''
Phys. Rev. D \textbf{25}, 2649 (1982)
doi:10.1103/PhysRevD.25.2649
%1362 citations counted in INSPIRE as of 27 Jul 2026

%\cite{Luscher:1998pqa}
\bibitem{Luscher:1998pqa}
M.~Luscher,
``Exact chiral symmetry on the lattice and the Ginsparg-Wilson relation,''
Phys. Lett. B \textbf{428}, 342-345 (1998)
doi:10.1016/S0370-2693(98)00423-7
[arXiv:hep-lat/9802011 [hep-lat]].
%977 citations counted in INSPIRE as of 27 Jul 2026

%\cite{Neuberger:1998wv}
\bibitem{Neuberger:1998wv}
H.~Neuberger,
``More about exactly massless quarks on the lattice,''
Phys. Lett. B \textbf{427}, 353-355 (1998)
doi:10.1016/S0370-2693(98)00355-4
[arXiv:hep-lat/9801031 [hep-lat]].
%912 citations counted in INSPIRE as of 27 Jul 2026

%\cite{Stamatescu:1993ga}
\bibitem{Stamatescu:1993ga}
I.~O.~Stamatescu and T.~T.~Wu,
``A New formulation of lattice gauge theory with fermions,''
CERN-TH-6631-92.
%2 citations counted in INSPIRE as of 27 Jul 2026

%\cite{Stamatescu:1994yj}
\bibitem{Stamatescu:1994yj}
I.~O.~Stamatescu and T.~T.~Wu,
``Lattice fermion formulation with one-sided derivatives,''
Nucl. Phys. B Proc. Suppl. \textbf{42}, 838-840 (1995)
doi:10.1016/0920-5632(95)00397-R
%7 citations counted in INSPIRE as of 27 Jul 2026

%\cite{Sadooghi:1996ip}
\bibitem{Sadooghi:1996ip}
N.~Sadooghi and H.~J.~Rothe,
``Continuum behavior of lattice QED, discretized with one sided lattice differences, in one loop order,''
Phys. Rev. D \textbf{55}, 6749-6759 (1997)
doi:10.1103/PhysRevD.55.6749
[arXiv:hep-lat/9610001 [hep-lat]].
%17 citations counted in INSPIRE as of 27 Jul 2026

%\cite{Guo:2021sjp}
\bibitem{Guo:2021sjp}
X.~Guo, C.~T.~Ma and H.~Zhang,
``Naive lattice fermion without doublers,''
Phys. Rev. D \textbf{104}, no.9, 094505 (2021)
doi:10.1103/PhysRevD.104.094505
[arXiv:2105.10977 [hep-lat]].
%4 citations counted in INSPIRE as of 27 Jul 2026

%\cite{Guo:2024jqt}
\bibitem{Guo:2024jqt}
X.~Guo, C.~T.~Ma and H.~Zhang,
``Non-Hermitian lattice fermions in the 2D Gross-Neveu-Yukawa model,''
Phys. Rev. D \textbf{110}, no.3, 034502 (2024)
doi:10.1103/PhysRevD.110.034502
[arXiv:2404.18441 [hep-th]].
%3 citations counted in INSPIRE as of 27 Jul 2026

%\cite{Schwinger:1958qau}
\bibitem{Schwinger:1958qau}
J.~Schwinger,
``ON THE EUCLIDEAN STRUCTURE OF RELATIVISTIC FIELD THEORY,''
Proc. Nat. Acad. Sci. \textbf{44}, no.9, 956-965 (1958)
doi:10.1073/pnas.44.9.956
%101 citations counted in INSPIRE as of 27 Jul 2026

%\cite{Osterwalder:1974tc}
\bibitem{Osterwalder:1974tc}
K.~Osterwalder and R.~Schrader,
``Axioms for Euclidean Green's Functions. 2.,''
Commun. Math. Phys. \textbf{42}, 281 (1975)
doi:10.1007/BF01608978
%719 citations counted in INSPIRE as of 27 Jul 2026

%\cite{Schwinger:1951ex}
\bibitem{Schwinger:1951ex}
J.~S.~Schwinger,
``On the Green's functions of quantized fields. 1.,''
Proc. Nat. Acad. Sci. \textbf{37}, 452-455 (1951)
doi:10.1073/pnas.37.7.452
%851 citations counted in INSPIRE as of 27 Jul 2026

%\cite{Lombardo:1999cz}
\bibitem{Lombardo:1999cz}
M.~P.~Lombardo,
``Finite density (might well be easier) at finite temperature,''
Nucl. Phys. B Proc. Suppl. \textbf{83}, 375-377 (2000)
doi:10.1016/S0920-5632(00)91678-5
[arXiv:hep-lat/9908006 [hep-lat]].
%85 citations counted in INSPIRE as of 28 Jul 2026

%\cite{DElia:2002tig}
\bibitem{DElia:2002tig}
M.~D'Elia and M.~P.~Lombardo,
``Finite density QCD via imaginary chemical potential,''
Phys. Rev. D \textbf{67}, 014505 (2003)
doi:10.1103/PhysRevD.67.014505
[arXiv:hep-lat/0209146 [hep-lat]].
%733 citations counted in INSPIRE as of 28 Jul 2026

%\cite{Braun:2012ww}
\bibitem{Braun:2012ww}
J.~Braun, J.~W.~Chen, J.~Deng, J.~E.~Drut, B.~Friman, C.~T.~Ma and Y.~D.~Tsai,
``Imaginary polarization as a way to surmount the sign problem in $Ab$ $Initio$ calculations of spin-imbalanced Fermi gases,''
Phys. Rev. Lett. \textbf{110}, 130404 (2013)
doi:10.1103/PhysRevLett.110.130404
[arXiv:1209.3319 [cond-mat.stat-mech]].
%29 citations counted in INSPIRE as of 27 Jul 2026

%\cite{Brandt:2017oyy}
\bibitem{Brandt:2017oyy}
B.~B.~Brandt, G.~Endrodi and S.~Schmalzbauer,
``QCD phase diagram for nonzero isospin-asymmetry,''
Phys. Rev. D \textbf{97}, no.5, 054514 (2018)
doi:10.1103/PhysRevD.97.054514
[arXiv:1712.08190 [hep-lat]].
%206 citations counted in INSPIRE as of 27 Jul 2026

%\cite{Hasenfratz:1983ba}
\bibitem{Hasenfratz:1983ba}
P.~Hasenfratz and F.~Karsch,
``Chemical Potential on the Lattice,''
Phys. Lett. B \textbf{125}, 308-310 (1983)
doi:10.1016/0370-2693(83)91290-X
%440 citations counted in INSPIRE as of 27 Apr 2026

%\cite{Giveon:1994fu}
\bibitem{Giveon:1994fu}
A.~Giveon, M.~Porrati and E.~Rabinovici,
``Target space duality in string theory,''
Phys. Rept. \textbf{244}, 77-202 (1994)
doi:10.1016/0370-1573(94)90070-1
[arXiv:hep-th/9401139 [hep-th]].
%1163 citations counted in INSPIRE as of 27 Jul 2026

%\cite{Ma:2018efs}
\bibitem{Ma:2018efs}
C.~T.~Ma,
``Parity Anomaly and Duality Web,''
Fortsch. Phys. \textbf{66}, no.8-9, 1800045 (2018)
doi:10.1002/prop.201800045
[arXiv:1802.08959 [hep-th]].
%16 citations counted in INSPIRE as of 27 Jul 2026

%\cite{Ma:2023krt}
\bibitem{Ma:2023krt}
C.~T.~Ma,
``AdS$_{3}$ Einstein gravity and boundary description: pedagogical review,''
Class. Quant. Grav. \textbf{41}, no.2, 023001 (2024)
doi:10.1088/1361-6382/ad17f0
[arXiv:2310.04665 [hep-th]].
%2 citations counted in INSPIRE as of 27 Jul 2026

%\cite{Ma:2025zaz}
\bibitem{Ma:2025zaz}
C.~T.~Ma,
``Non-Commutative Geometry for D-Branes in Large R-R Field Background,''
[arXiv:2510.24998 [hep-th]].
%1 citations counted in INSPIRE as of 27 Jul 2026



\end{thebibliography}
\end{document}